\newcommand{\Xomit}[1]{}
\newtheorem{theorem}{Theorem}[section]
\newtheorem{lemma}[theorem]{Lemma}
\newtheorem{definition}{Definition}[section]
\newcommand{\expand}{\mbox{\sc expand}}
\newcommand{\round}{\mbox{\sc decr}}
\newcommand{\reduce}{\mbox{\sc reduce}}
\newcommand{\descend}{\mbox{\sc descend}}
\newcommand{\density}{\textit{density}}
\newcommand{\depth}{\textit{depth}}
\newcommand{\girth}{\textit{girth}}
\newcommand{\radius}{\textit{orad}}
\newcommand{\squares}{\textit{squares}}
\newcommand{\wmax}{\textit{wmax}}
\newcommand{\border}{\textit{Border}}
\newcommand{\inherit}{\textit{Inherit}}
\newcommand{\treeroot}{\textit{Root}}
\newcommand{\leaf}{\textit{Leaf}}
\newcommand{\nonleaf}{\textit{Nonleaf}}
\newcommand{\setabove}{\textit{Above}}
\newcommand{\setbelow}{\textit{Below}}
\newcommand{\lchild}{\textit{Lchild}}
\newcommand{\rchild}{\textit{Rchild}}
\newcommand{\oracle}{\textit{Oracle}}
\newcommand{\lookuptable}{\textit{Table}}
\title{Computing the Girth of a Planar Graph in Linear Time\thanks{A
    preliminary version appeared in COCOON~2011~\cite{ChangL11}. The current version is accepted to {\em SIAM Journal on Computing}.}}
\author{Hsien-Chih Chang\thanks{Email: {\tt{hchang17@illinois.edu}}.
    Department of Computer Science and Information Engineering,
    National Taiwan University.}
\and
Hsueh-I Lu\thanks{Corresponding author. Email:
  {\tt{hil@csie.ntu.edu.tw}}. Web: {\tt{www.csie.ntu.edu.tw/\~{}hil}}.
  Department of Computer Science and Information Engineering, National
  Taiwan University.  This author also holds joint appointments in the
  Graduate Institute of Networking and Multimedia and the Graduate
  Institute of Biomedical Electronics and Bioinformatics, National
  Taiwan University. Address: 1 Roosevelt Road, Section 4, Taipei 106,
  Taiwan, ROC. Research supported in part by NSC grants
  98--2221--E-002--079--MY3 and 101--2221--E--002--062--MY3.}  }
\date{April 22, 2013}
\begin{document}

\maketitle

\begin{abstract}
The {\em girth} of a graph is the minimum weight of all simple cycles
of the graph.  We study the problem of determining the girth of an
$n$-node unweighted undirected planar graph.  The first non-trivial
algorithm for the problem, given by Djidjev, runs in $O(n^{5/4}\log
n)$ time.  Chalermsook, Fakcharoenphol, and Nanongkai reduced the
running time to $O(n\log^2 n)$.  Weimann and Yuster further reduced
the running time to $O(n\log n)$.  In this paper, we solve the problem
in $O(n)$ time.
\end{abstract}

\section{Introduction}
Let $G$ be an edge-weighted simple graph, i.e., $G$ does not contain
multiple edges and self-loops.  We say that $G$ is {\em unweighted} if
the weight of each edge of $G$ is one.  A cycle of $G$ is {\em simple}
if each node and each edge of $G$ is traversed at most once in the
cycle.  The {\em girth} of $G$, denoted $\girth(G)$, is the minimum
weight of all simple cycles of $G$.  For instance, the girth of each
graph in Figure~\ref{figure:fig1} is four.  As shown by, e.g.,
Bollob\'as~\cite{Bollobas78}, Cook~\cite{Cook75}, Chandran and
Subramanian~\cite{ChandranS05}, Diestel~\cite{Diestel00},
Erd\H{o}s~\cite{Erdos59}, and Lov\'{a}sz~\cite{Lovasz68}, girth is a
fundamental combinatorial characteristic of graphs related to many
other graph properties, including degree, diameter, connectivity,
treewidth, and maximum genus.  We address the problem of computing the
girth of an $n$-node graph.  Itai and Rodeh~\cite{ItaiR78} gave the
best known algorithm for the problem, running in time $O(M(n) \log
n)$, where $M(n)$ is the time for multiplying two $n\times n$
matrices~\cite{CoppersmithW90}.  In the present paper, we focus on the
case that the input graph is undirected, unweighted, and planar.
Djidjev~\cite{Djidjev00,Djidjev10} gave the first non-trivial
algorithm for the case, running in $O(n^{5/4} \log n)$ time.  The
min-cut algorithm of Chalermsook, Fakcharoenphol, and
Nanongkai~\cite{ChalermsookFN04} reduced the time complexity to $O(n
\log^2 n)$, using the maximum-flow algorithms of, e.g., Borradaile and
Klein~\cite{BorradaileK09} or Erickson~\cite{Erickson10}.  Weimann and
Yuster~\cite{WeimannY10} further reduced the running time to $O(n\log
n)$.  Linear-time algorithms for an undirected unweighted planar graph
were known only when the girth of the input graph is bounded by a
constant, as shown by Itai and Rodeh~\cite{ItaiR78}, Vazirani and
Yannakakis~\cite{VaziraniY88}, and Eppstein~\cite{Eppstein99}.  We
give the first optimal algorithm for any undirected unweighted planar
graph.
\begin{theorem}
\label{theorem:theorem1.1}
The girth of an $n$-node undirected unweighted planar graph is
computable in $O(n)$ time.
\end{theorem}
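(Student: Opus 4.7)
The plan is to combine linear-time algorithms for bounded-girth planar graphs with a carefully balanced recursive decomposition for the large-girth case. As a first step, I would run the known linear-time algorithms of Itai--Rodeh and Eppstein to test whether $\girth(G)\le g_0$ for some constant threshold $g_0$; if so, we are done in $O(n)$ time. So we may assume $\girth(G)>g_0$. In the remaining case, we still have $|E(G)| \le (g_0/(g_0-2))\,n=O(n)$ by Euler's formula, and the shortest cycle through any vertex $v$ can be found by a BFS from $v$ truncated to depth $\lfloor\girth(G)/2\rfloor$.

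For the large-girth case, the plan is to build a recursive decomposition of $G$ into a binary tree of subgraphs, consistent with the macros \treeroot, \lchild, \rchild, \leaf, \nonleaf: each non-leaf is split by a planar separator (Miller's simple-cycle separator, or a Frederickson $r$-division applied in a balanced way), and each leaf has constant size. Each region $R$ carries a \border{} $\partial R$ of vertices it shares with its sibling, together with distance information inherited from ancestors (this is what the \inherit{} macro would encode). Every simple cycle of $G$ either lies entirely inside a single leaf region, in which case it is found by brute force in the base case, or it crosses the border of some ancestor region. For the crossing case, at each tree node we compute, for every $v\in\partial R$, the shortest cycle through $v$ that is contained in $R$; the global girth is the minimum of these values over all nodes together with the leaf base cases. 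Shortest paths between border vertices are stored as a \lookuptable{} and queried through an \oracle{} so that each combination check costs $O(1)$.

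The main obstacle is to bound the total work by $O(n)$, rather than the $O(n\log n)$ that the Weimann--Yuster scheme yields. The naive analysis pays at each of the $\Theta(\log n)$ levels of the decomposition for a BFS from every border vertex, and this is precisely the source of the extra $\log n$. To remove it, I expect three ingredients to be needed simultaneously. First, the region size at each level must be chosen as a function of the girth so that the product of the border size, the BFS depth (which is at most $\girth(G)/2$), and the edge density is a geometric series summing to $O(n)$. Second, BFS information from a parent region must be reused by its children rather than recomputed, so that the work at each level is amortized across ancestors (this is what the \inherit{}, \expand{}, \reduce{}, and \descend{} macros appear designed to support). Third, all distance queries must be implemented by direct table lookup in constant time, with no logarithmic-factor data structures anywhere in the pipeline.

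The hardest part of the proof is the bookkeeping that shows these three ingredients actually compose to give linear total time. In particular, one needs an invariant --- presumably controlled by parameters such as \density, \depth, \radius, and \wmax{} --- that is preserved under the recursive splits and whose geometric decay across tree levels absorbs the cost of the BFS computations. Verifying that invariant in the presence of the shared-border structure, and showing that the oracle and lookup table can be built within the same budget, is where the main technical effort will lie.
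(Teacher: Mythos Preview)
Your proposal is not a proof but a plan with explicitly acknowledged gaps, and the missing pieces are exactly where the real work lies. Two concrete deficiencies:

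\textbf{The density dichotomy is absent.} The paper's first and most important move is to contract maximal degree-two paths of the unweighted graph $G_0$ into single weighted edges, obtaining an $m$-node weighted planar graph $G'$ with $m\le n$ and $\density(G')=n/m$. If $n/m\ge\log^2 m$, the existing $O(m\log^2 m)$ algorithm of Chalermsook et al.\ already runs in $O(n)$ time and we are done. If $n/m<\log^2 m$, then the Weimann--Yuster lemma $\girth(G')\le 36\cdot\density(G')$ gives $\girth(G')=O(\log^2 m)$, and one can then cap edge weights and reduce the outerplane radius to $O(\log^2 m)$. Your constant-threshold test $\girth\le g_0$ gives no such leverage: the hard regime is girth $\Theta(\log n)$, which your preprocessing does not touch, and afterwards you still face precisely the $O(n\log n)$ barrier of Weimann--Yuster. (Incidentally, \expand, \reduce, \round, \descend, \density, \wmax{} are all about these weighted-graph transformations, not about reusing BFS data.)

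\textbf{No mechanism is given to kill the log factor.} Your three ``ingredients'' are hopes, not arguments: you cannot choose region sizes as a function of $\girth(G)$ before you know it, and ``reuse BFS information'' is not a technique. The paper's actual mechanism is a two-level Four-Russians recursion. A first dissection tree reduces the problem to border-distance subproblems on pieces of size $\ell(m)=\Theta(\log^{30} m)$; a second dissection tree on each such piece reduces further to pieces of size $\Theta((\log\log m)^{30})$. At that size there are only $O(m)$ possible weighted instances (weights bounded by $O(\log^2 m)$), so a single global table $\lookuptable$ precomputed in $O(m)$ time answers all base cases. The nonleaf (border-to-border) computations use Klein's $O(\ell^{7/10})$-per-query dynamic distance oracle together with a dynamic program over the dissection tree; the conditions $|S|,|\border(S)|=O(r\log m)$ and $|V(T)|=O(m/\ell(m))$ make $\squares(T)=\sum|S|^2=O(m)$, and the total cost telescopes to $O(m\cdot\density(G))=O(n)$. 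None of these components --- the contracted-graph density bound, the polylog-outerplane reduction, Klein's oracle, or the tabulation of micro-instances --- appears in your outline, and without them your BFS-from-border scheme has no visible route below $O(n\log n)$.
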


\begin{figure}[t]
\centerline{\scalebox{0.85}{\input{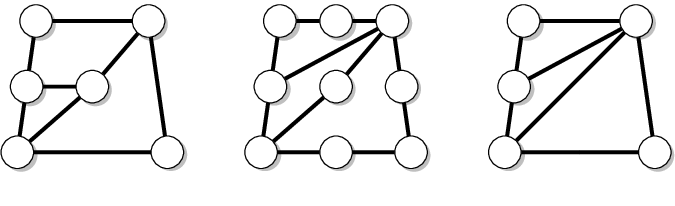}}}
\caption{(a) A planar graph $G$ with nonnegative integral edge
  weights. (b) The expanded version $\expand(G)$ of $G$.  (c) A
  contracted graph $G'$ with $\expand(G')=\expand(G)$.}
\label{figure:fig1}
\end{figure}

{\em Related work}.  The $O(M(n) \log n)$-time algorithm of Itai and
Rodeh~\cite{ItaiR78} also works for directed graphs.  The best known
algorithm for directed planar graphs, due to Weimann and
Yuster~\cite{WeimannY10}, runs in $O(n^{3/2})$ time.  The $O(n\log^2
n)$-time algorithm of Chalermsook et al.~\cite{ChalermsookFN04}, using
the maximum-flow algorithms of Borradaile and
Klein~\cite{BorradaileK09} or Erickson~\cite{Erickson10} also works
for undirected planar graphs with nonnegative weights.  The recent
max-flow algorithm of Italiano, Nussbaum, Sankowski, and
Wulff-Nilsen~\cite{ItalianoNSW11} improved the running time of
Chalermsook et al.~to $O(n\log n \log\log n)$.  For any given constant
$k$, Alon, Yuster, and Zwick~\cite{AlonYZ95} showed that a $k$-edge
cycle of any $n$-node general graph, if it exists, can be found in
$O(M(n) \log n)$ time or expected $O(M(n))$ time. The time complexity
was reduced to $O(n^2)$ by Yuster and Zwick~\cite{YusterZ97}
(respectively, $O(n)$ by Dorn~\cite{Dorn10}) if $k$ is even
(respectively, the input graph is planar).  See,
e.g.,~\cite{GomoryH61,Monien83,HaoO94,NagamochiI92,KargerS96,Karger00,LingasL09,ShihWK90,AlonYZ97,Mader98,DiestelR04,Hayes03,LinialMN02,Molloy02,DyerF01,Kochol05,KostochkaKSS10,KawarabayashiT09,Cabello10,CabelloVL10a,EricksonW10,WangL03,ItalianoNSW11}
for work related to girths and min-weight cycles in the literature.

{\em Overview.}  The {\em degree} of a graph is the maximum degree of
the nodes in the graph.  For instance, the number of neighbors of each
node in an $O(1)$-degree graph is bounded by an absolute constant.  To
compute $\girth(G_0)$ for the input $n$-node planar graph $G_0$, we
turn $G_0$ into an $m$-node ``contracted''
(see~\S\ref{subsection:contract}) graph $G'$ with positive integral
edge weights such that $m\leq n$ and $\girth(G')=\girth(G_0)$, as done
by Weimann and Yuster~\cite{WeimannY10}.  If the ``density''
(see~\S\ref{subsection:contract}) of $G'$ is $\Omega(\log^2 m)$, we
can afford to use the algorithm of Chalermsook et
al.~\cite{ChalermsookFN04} (see Theorem~\ref{lemma:lemma2.1}) to
compute $\girth(G')$.  Otherwise, by $\girth(G')=O(\log^2 m)$, as
proved by Weimann and Yuster (see Lemma~\ref{lemma:lemma2.4}), and the
fact $G'$ has positive integral weights, we can further transform $G'$
to a $\Theta(m)$-node $O(\log^2 m)$-outerplane graph $G$ with $O(1)$
degree, $O(\log^2 m)$ density, and $O(\log^2 m)$ maximum weight such
that $\girth(G)=\girth(G')$.  The way we reduce the ``outerplane
radius'' (see~\S\ref{subsection:radius}) is similar to those of
Djidjev~\cite{Djidjev10} and Weimann and Yuster~\cite{WeimannY10}.  In
order not to increase the outerplane radius, our degree-reduction
operation (see \S\ref{subsection:radius}) is different from that of
Djidjev~\cite{Djidjev10}.  Although $G$ may have zero-weight edges and
may no longer be contracted, it does not affect the correctness of the
following approach for computing $\girth(G)$.

A cycle of a graph is {\em non-degenerate} if some edge of the graph
is traversed exactly once in the cycle.  Let $u$ and $v$ be two
distinct nodes of $G$.  Let $g(u,v)$ be the minimum weight of any
simple cycle of $G$ that contains $u$ and $v$.  Let $d(u,v)$ be the
distance of $u$ and $v$ in $G$. For any edge $e$ of $G$, let
$d(u,v;e)$ be the distance of $u$ and $v$ in $G\setminus\{e\}$.  If
$e(u,v)$ is an edge of some min-weight path between $u$ and $v$ in
$G$, then $d(u,v)+d(u,v;e(u,v))$ is the minimum weight of any
non-degenerate cycle containing $u$ and $v$ that traverses $e(u,v)$
exactly once. In general, $d(u,v)+d(u,v;e(u,v))$ could be less than
$g(u,v)$. However, if $u$ and $v$ belong to a min-weight simple cycle
of $G$, then $d(u,v)+d(u,v;e(u,v))=g(u,v)=\girth(G)$.

Computing the minimum $d(u,v)+d(u,v;e(u,v))$ over all pairs of nodes
$u$ and $v$ in $G$ is too expensive. However, computing
$d(u,v)+d(u,v;e(u,v))$ for all pairs of nodes $u$ and $v$ in a small
node set $S$ of $G$ leads to a divide-and-conquer procedure for
computing $\girth(G)$.  Specifically, since $G$ is an $O(\log^2
m)$-outerplane graph, there is an $O(\log^2 m)$-node set $S$ of $G$
partitioning $V(G)\setminus S$ into two non-adjacent sets $V_1$ and
$V_2$ with roughly equal sizes.  Let~$C$ be a min-weight simple cycle
of $G$. Let $G_1$ (respectively, $G_2$) be the subgraph of $G$ induced
by $V_1\cup S$ (respectively, $V_2\cup S$).  If $V(C)\cap S$ has at
most one node, the weight of $C$ is the minimum of $\girth(G_1)$ and
$\girth(G_2)$.  Otherwise, the weight of $C$ is the minimum
$d(u,v)+d(u,v;e(u,v))$ over all $O(\log^4 m)$ pairs of nodes $u$ and
$v$ in $S$.  Edges $e(u,v)$ and distances $d(u,v)$ and $d(u,v;e(u,v))$
in $G$ can be obtained via dynamic programming from edges $e(u,v)$ and
distances $d(u,v)$ and $d(u,v;e(u,v))$ in $G_1$ and $G_2$ for any two
nodes $u$ and $v$ in an $O(\log^3 m)$-node superset ``$\border(S)$''
(see~\S\ref{section:task1}) of $S$.  The above recursive procedure
(see Lemma~\ref{lemma:lemma5.4}) is executed for two levels.  The
first level (see the proofs of Lemmas~\ref{lemma:lemma3.1} and
\ref{lemma:lemma5.4}) reduces the girth problem of $G$ to girth and
distance problems of graphs with $O(\log^{30} m)$ nodes. The second
level (see the proofs of Lemmas~\ref{lemma:lemma5.6} and
\ref{lemma:lemma6.1}) further reduces the problems to girth and
distance problems of graphs with $O((\log\log m)^{30})$ nodes, each of
whose solutions can thus be obtained directly from an $O(m)$-time
pre-computable data structure (see Lemma~\ref{lemma:lemma5.5}).  Just
like Djidjev~\cite{Djidjev10} and Chalermsook et
al.~\cite{ChalermsookFN04}, we rely on dynamic data structures for
planar graphs.  Specifically, we use the dynamic data structure of
Klein~\cite{Klein05} (see Lemma~\ref{lemma:lemma5.2}) that supports
point-to-point distance queries.  We also use Goodrich's decomposition
tree~\cite{Goodrich95} (see Lemma~\ref{lemma:lemma4.2}), which is
based on the link-cut tree of Sleator and Tarjan~\cite{SleatorT83}.
The interplay among the densities, outerplane radii, and maximum
weights of subgraphs of $G$ is crucial to our analysis.  Although it
seems unlikely to complete these two levels of reductions in $O(m)$
time, we can fortunately bound the overall time complexity by $O(n)$.

The rest of the paper is organized as follows.
Section~\ref{section:prelim} gives the preliminaries and reduces the
girth problem on a general planar graph to the girth problem on a
graph with $O(1)$ degree and poly-logarithmic maximum weight,
outerplane radius, and density.  Section~\ref{section:framework} gives
the framework of our algorithm, which consists of three tasks.
Section~\ref{section:task1} shows Task~1.  Section~\ref{section:task2}
shows Task~2.  Section~\ref{section:task3} shows Task~3.
Section~\ref{section:conclude} concludes the paper.

\section{Preliminaries}
\label{section:prelim}

All logarithms throughout the paper are to the base of two.  Unless
clearly specified otherwise, all graphs are undirected simple planar
graphs with nonnegative integral edge weights.
Let $|S|$ denote the cardinality of set~$S$.  Let $V(G)$ consist of
the nodes of graph $G$.  Let $E(G)$ consist of the edges of graph
$G$. Let $|G|=|V(G)|+|E(G)|$.  By planarity of $G$, we have
$|G|=\Theta(|V(G)|)$.  Let $\wmax(G)$ denote the maximum edge weight
of $G$.  For instance, if $G$ is as shown in
Figures~\ref{figure:fig1}(a) and~\ref{figure:fig1}(b), then 
$\wmax(G)=2$ and $\wmax(G)=1$, respectively.  Let $w(G)$ denote the
sum of edge weights of graph $G$.  Therefore, $\girth(G)$ is the
minimum $w(C)$ over all simple cycles $C$ of $G$.

\begin{theorem}[Chalermsook et al.~\cite{ChalermsookFN04}]
\label{lemma:lemma2.1}
If $G$ is an $m$-node planar graph with nonnegative weights, then it
takes $O(m\log^2 m)$ time to compute $\girth(G)$.
\end{theorem}

\subsection{Expanded version, density, weight decreasing, and contracted graph}
\label{subsection:contract}
The {\em expanded version} of graph $G$, denoted $\expand(G)$, is the
unweighted graph obtained from $G$ by the following operations: (1)
For each edge $(u,v)$ with positive weight $k$, we replace edge
$(u,v)$ by an unweighted path $(u,u_1,u_2,\ldots,u_{k-1},v)$; and (2)
for each edge $(u,v)$ with zero weight, we delete edge $(u,v)$ and
merge $u$ and $v$ into a new node.  For instance, the graph in
Figure~\ref{figure:fig1}(b) is the expanded version of the graphs in
Figures~\ref{figure:fig1}(a) and~\ref{figure:fig1}(c).  One can verify
that the expanded version of $G$ has $w(G)-|E(G)|+|V(G)|$ nodes.
Define the {\em density} of $G$ to be
\begin{displaymath}
\density(G) = \frac{|V(\expand(G))|}{|V(G)|}.
\end{displaymath}
For instance, the densities of the graphs in
Figures~\ref{figure:fig1}(a) and~\ref{figure:fig1}(c)
are~$\frac{3}{2}$ and $\frac{9}{5}$, respectively.  
\begin{lemma}
\label{lemma:lemma2.2}
The following statements hold for any graph $G$.
\begin{enumerate}
\item 
\label{item:lemma2.2:item1}
$\girth(\expand(G))=\girth(G)$.

\item 
\label{item:lemma2.2:item2}
$\density(G)$ can be computed from $G$ in $O(|G|)$ time.
\end{enumerate}
\end{lemma}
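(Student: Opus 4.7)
For part 1, the plan is to prove both inequalities via the standard ``every closed walk contains a simple subcycle of no greater weight'' technique. Given a minimum-weight simple cycle $C$ in $G$, I would transport it to a closed walk in $\expand(G)$ of length $w(C)$ by replacing each positive-weight edge of $C$ by its subdivision path and by absorbing each zero-weight edge of $C$ into the supernode created by merging its endpoints; extracting a simple subcycle from this walk yields $\girth(\expand(G))\le\girth(G)$. Conversely, every internal subdivision vertex of $\expand(G)$ has degree exactly $2$, so a minimum-weight simple cycle $C'$ in $\expand(G)$ enters each expansion path at most once and must traverse it in its entirety; replacing each traversed path by its original edge and, inside each visited supernode, inserting whichever zero-weight edges of $G$ are needed to connect the cycle's entry vertex to its exit vertex produces a closed walk in $G$ of weight $|E(C')|$, from which a simple subcycle gives $\girth(G)\le\girth(\expand(G))$.

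For part 2, I would compute the denominator $|V(G)|$ in $O(|G|)$ time directly from the input, and compute the numerator $|V(\expand(G))|$ by combining two contributions: each positive-weight edge $e$ contributes $w(e)-1$ fresh subdivision nodes, whose total is obtained by a single scan over the edge list; and the contraction of the zero-weight subgraph changes the vertex count by $-(|V_0|-c_0)$, where $V_0$ is the set of endpoints of zero-weight edges and $c_0$ is the number of connected components of the zero-weight subgraph of $G$. One BFS or DFS over the zero-weight subgraph computes $|V_0|$ and $c_0$ in $O(|G|)$ time, and a single division then yields $\density(G)$.

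The step I expect to be most delicate is the cycle correspondence of part 1 in the presence of zero-weight edges, since the many-to-one identification of vertices under zero-weight contraction destroys any naive bijection between simple cycles of $G$ and of $\expand(G)$; a cycle simple on one side may lift to a walk that revisits a vertex on the other side. The ``extract a simple subcycle from a closed walk'' trick bypasses this cleanly, because every discarded sub-walk has non-negative total weight and so extraction never inflates the resulting cycle's weight.
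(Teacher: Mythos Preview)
The paper states this lemma without proof, treating both parts as immediate. Your argument is correct and is the natural one.

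For part~2, note that the paper supplies, just before the lemma, the closed form $|V(\expand(G))| = w(G) - |E(G)| + |V(G)|$, which reduces the computation to a single pass over the edge list; your BFS/DFS on the zero-weight subgraph is unnecessary in the paper's setting (where zero-weight edges arise only via the $\reduce$ operation and hence form a forest), though it is correct and indeed handles the more general situation in which the zero-weight edges may contain a cycle---where the paper's formula would undercount.

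Your caution about zero-weight edges in part~1 is well placed. If $G$ contains a simple cycle all of whose edges have weight zero, then $\girth(G)=0$ while $\expand(G)$ is unweighted, so $\girth(\expand(G))\ge 3$ (or $\infty$) and the identity fails as literally stated; your closed-walk construction breaks exactly here, since the image walk degenerates to a single vertex and no simple subcycle can be extracted. This case never occurs in the paper's applications, so the lemma should be read with the tacit hypothesis that the zero-weight edges of $G$ form a forest.
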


For any number $w$, let $\round(G,w)$ be the graph obtainable in
$O(|G|)$ time from $G$ by decreasing the weight of each edge $e$ with
$w(e)>w$ down to $w$. The following lemma is straightforward.
\begin{lemma}
\label{lemma:lemma2.3}
If $G$ is a graph and $w$ is a positive integer, then
$\density(\round(G,w))\leq\density(G)$.  Moreover, if
$w\geq\girth(G)$, then $\girth(\round(G,w))=\girth(G)$.
\end{lemma}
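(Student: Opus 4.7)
My plan is to derive both statements directly from the identity $|V(\expand(H))|=w(H)-|E(H)|+|V(H)|$ noted just above the lemma, together with the observation that $\round(G,w)$ has the same node set and edge set as $G$ and only possibly smaller edge weights.

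For the density inequality, I would plug into that identity. Since $V(\round(G,w))=V(G)$ and $E(\round(G,w))=E(G)$ while rounding can only decrease the total weight, we have $|V(\expand(\round(G,w)))|=w(\round(G,w))-|E(G)|+|V(G)|\leq w(G)-|E(G)|+|V(G)|=|V(\expand(G))|$. Dividing by the common denominator $|V(G)|$ gives $\density(\round(G,w))\leq\density(G)$.

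For the girth equality, I would prove the two inequalities separately. To see $\girth(\round(G,w))\leq\girth(G)$, take any min-weight simple cycle $C$ of $G$. Its total weight is $\girth(G)\leq w$, and because all edge weights are nonnegative integers that sum to at most $w$, every single edge of $C$ already has weight at most $w$. Hence rounding leaves every edge of $C$ unchanged, so $C$ is a simple cycle of $\round(G,w)$ with the same weight. Conversely, to see $\girth(\round(G,w))\geq\girth(G)$, consider any simple cycle $C'$ of $\round(G,w)$. If rounding touched none of its edges, then $C'$ is a cycle of $G$ with the same weight, so its weight is at least $\girth(G)$. Otherwise $C'$ contains some edge whose weight in $\round(G,w)$ is exactly $w$, and nonnegativity of the remaining weights forces the total to be at least $w\geq\girth(G)$.

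I do not expect any real obstacle; the only subtlety worth flagging is the appeal to nonnegative \emph{integer} weights to conclude that, in a cycle of total weight at most $w$, every edge individually has weight at most $w$, which is exactly what lets the min-weight cycle of $G$ survive the rounding.
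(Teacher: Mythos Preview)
Your proposal is correct, and the paper does not actually give a proof of this lemma at all; it simply declares it ``straightforward'' after stating it. Your argument is exactly the kind of direct verification the paper has in mind. One tiny remark: in your final paragraph you flag integrality as the key to concluding that every edge of a min-weight cycle has weight at most $w$, but in fact nonnegativity alone already forces this (a sum of nonnegative terms bounded by $w$ has each term bounded by $w$); integrality is not needed there.
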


A graph is {\em contracted} if the two neighbors of any degree-two
node of the graph are adjacent in the graph.  For instance, the graphs
in Figures~\ref{figure:fig1}(a) and~\ref{figure:fig1}(b) are not
contracted and the graph in Figure~\ref{figure:fig1}(c) is contracted.

\begin{lemma}
[{Weimann and Yuster~\cite[Lemma~3.3]{WeimannY10}}]\
\label{lemma:lemma2.4}
\begin{enumerate}
\item 
\label{item:lemma2.4:item1}
Let $G_0$ be an $n$-node unweighted biconnected planar graph.  It
takes $O(n)$ time to compute an $m$-node biconnected contracted planar
graph $G$ with positive integral weights such that $m\leq n$ and
$G_0=\expand(G)$.

\item
\label{item:lemma2.4:item2}
If $G$ is a biconnected contracted planar graph with positive integral
weights, then we have that $\girth(G)\leq 36\cdot
\density(G)$.
\end{enumerate}
\end{lemma}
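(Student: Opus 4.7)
The statement splits cleanly into a construction (part~\ref{item:lemma2.4:item1}) and an inequality (part~\ref{item:lemma2.4:item2}), which I would tackle separately. For part~\ref{item:lemma2.4:item1}, my plan is a single linear-time pass over $G_0$: identify every maximal path whose internal nodes all have degree~$2$ in $G_0$ and collapse each such path to a single edge whose (positive integer) weight equals the path's length. The only configurations that threaten simplicity are two parallel degree-$2$ chains joining the same pair of endpoints (which would create parallel edges after contracting both) and the degenerate case in which $G_0$ is itself a single cycle (which would create a self-loop). In each such case I resolve the problem by retaining one interior vertex on one of the offending chains; that retained vertex is still allowable in a contracted graph because its two neighbors become adjacent via the newly created weighted edge. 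I then verify directly that $\expand(G) = G_0$, $|V(G)| \leq n$, and that $G$ is biconnected, simple, and contracted, and that the whole procedure runs in $O(n)$ time.

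For part~\ref{item:lemma2.4:item2}, I plan to combine the classical planar girth inequality with a lower bound on $|E(G)|$ forced by the contracted hypothesis. Write $v = |V(G)|$, $e = |E(G)|$, $w = w(G)$, and $v_0 = |V(\expand(G))| = w - e + v$, so $\density(G) = v_0/v$. By Lemma~\ref{lemma:lemma2.2}(\ref{item:lemma2.2:item1}), $\expand(G)$ is a simple biconnected planar graph with $v_0$ vertices, $w$ edges, and girth $\girth(G)$, so the planar girth inequality yields
\begin{equation*}
\girth(G) \;\leq\; \frac{2w}{e - v + 2}.
\end{equation*}
The contracted hypothesis then forces $e$ to grow with $v$: each degree-$2$ node $x$ has its two neighbors $y, z$ joined by an edge, so $x, y, z$ bound a triangular face of $G$, and for $v \geq 4$ biconnectedness rules out the configuration in which two degree-$2$ nodes share one such triangle (removing the unique non-degree-$2$ vertex would otherwise disconnect them from the rest of~$G$). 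Hence the number $n_2$ of degree-$2$ nodes satisfies $n_2 \leq f = e - v + 2$ by Euler's formula, while the degree-sum bound $2e \geq 2 n_2 + 3(v - n_2)$ gives $n_2 \geq 3v - 2e$; combining these yields $e - v + 2 \geq (v + 4)/3$.

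Plugging this back in, and then using $w = v_0 + e - v \leq v_0 + 2v$ (from $e \leq 3v - 6$) together with $v_0 \geq v$ (from $w \geq e$, since weights are positive integers), a short chain of inequalities
\begin{equation*}
\girth(G) \;\leq\; \frac{6w}{v + 4} \;\leq\; \frac{6 v_0}{v} + 12 \;\leq\; 18 \cdot \frac{v_0}{v} \;\leq\; 36 \cdot \density(G)
\end{equation*}
finishes the proof for $v \geq 4$; the remaining case $v = 3$ forces $G = K_3$ and is verified directly by $\girth(G) = w = 3\,\density(G)$. The main obstacle I expect is the bookkeeping for part~\ref{item:lemma2.4:item1} around parallel chains and the single-cycle input: conceptually clear but requiring care to preserve simplicity, biconnectedness, and the contracted property simultaneously. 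Part~\ref{item:lemma2.4:item2} is then a clean counting argument once one establishes that each degree-$2$ node is the unique apex of a distinct triangular face.
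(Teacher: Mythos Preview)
The paper does not supply its own proof of Lemma~\ref{lemma:lemma2.4}; it is quoted from Weimann and Yuster and used as a black box, so I evaluate your proposal on its own.

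Your plan for part~\ref{item:lemma2.4:item1} is the standard contraction and is fine, with exactly the right caveats flagged. For part~\ref{item:lemma2.4:item2} the overall strategy---planar girth inequality $\girth(G)\le 2w/(e-v+2)$ plus a lower bound on $e-v+2$ forced by the contracted hypothesis---is correct, and the arithmetic from $e-v+2\ge (v+4)/3$ onward checks out. The gap is the assertion that the triangle on $x,y,z$ bounds a \emph{face} of $G$: this is false in general. Take $G$ to be $K_{2,4}$ on parts $\{y,z\}$ and $\{a,b,c,d\}$, together with the extra edge $(y,z)$; this graph is simple, planar, biconnected, and contracted, with $v=6$, $e=9$, $f=5$, $n_2=4$. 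All four triangles $ayz,\,byz,\,cyz,\,dyz$ share the edge $(y,z)$, which lies on exactly two faces in any planar embedding, so at most two of the four triangles can be faces. Hence there is no injection from degree-$2$ vertices to faces along these triangles.

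The inequality $n_2\le f$ you need is nonetheless true, and the fact you already isolated---that for $v\ge 4$ no two degree-$2$ vertices are adjacent---is the right ingredient, just deployed differently. Delete all degree-$2$ vertices. Any path in $G$ between two vertices of degree $\ge 3$ can be rerouted around each internal degree-$2$ vertex $x$ via the edge joining $x$'s two neighbors, so the remaining graph on $v-n_2\ge 1$ vertices is connected and therefore has at least $(v-n_2)-1$ edges; since it has exactly $e-2n_2$ edges, this gives $n_2\le e-v+1=f-1$, which is slightly stronger than what your chain requires.
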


\subsection{Outerplane radius and degree reduction}
\label{subsection:radius}

A {\em plane graph} is a planar graph equipped with a planar
embedding.  A node of a plane graph is {\em external} if it is on the
outer face of the embedding.  The {\em outerplane depth} of a node $v$
in a plane graph $G$, denoted $\depth_G(v)$, is the positive integer
such that $v$ becomes external after peeling $\depth_G(v)-1$ levels of
external nodes from $G$.  The {\em outerplane radius} of $G$, denoted
$\radius(G)$, is the maximum outerplane depth of any node in $G$.  A
plane graph $G$ is {\em $r$-outerplane} if $\radius(G)\leq r$.  For
instance, in the graph shown in Figure~\ref{figure:fig1}(a), the
outerplane depth of the only internal node is two, and the outerplane
depths of the other five nodes are all one. The outerplane radius of
the graph in Figure~\ref{figure:fig1}(a) is two and the outerplane
radius of the graph in Figure~\ref{figure:fig1}(c) is one.  All three
graphs in Figure~\ref{figure:fig1} are $2$-outerplane.  The graph in
Figure~\ref{figure:fig1}(c) is also $1$-outerplane.

\begin{figure}[t]
\centerline{\scalebox{0.85}{\input{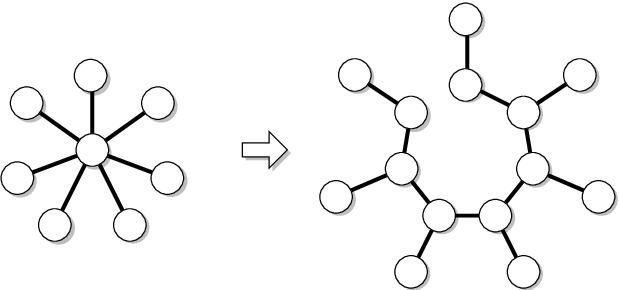}}}
\caption{The operation that turns a plane graph $G$ into
  $\reduce(G,v,u_1)$.}
\label{figure:fig2}
\end{figure}

Let $v$ be a node of plane graph $G$ with degree $d\geq 4$.  Let $u_1$
be a neighbor of $v$ in $G$.  For each $i=2,3,\ldots,d$, let $u_i$ be
the $i$-th neighbor of $v$ in $G$ starting from $u_1$ in clockwise
order around $v$.  Let $\reduce(G,v,u_1)$ be the plane graph obtained
from $G$ by the following steps, as illustrated by
Figure~\ref{figure:fig2}: (1) adding a zero-weight path
$(v_1,v_2,\ldots,v_d)$, (2) replacing each edge $(u_i,v)$ by edge
$(u_i,v_i)$ with $w(u_i,v_i)=w(u_i,v)$, and (3) deleting node $v$.

\begin{lemma}
\label{lemma:lemma2.5}
Let $v$ be a node of plane graph $G$ with degree four or more. If
$u_1$ is a neighbor of $v$ with the smallest outerplane depth in $G$,
then
\begin{enumerate}
\item
\label{item:lemma2.5:item1}
$\reduce(G,v,u_1)$ can be obtained from $G$ in time linear in the
degree of $v$ in $G$,
\item 
\label{item:lemma2.5:item2}
$\expand(\reduce(G,v,u_1))=\expand(G)$, and
\item 
\label{item:lemma2.5:item3}
$\radius(\reduce(G,v,u_1))=\radius(G)$.
\end{enumerate}
\end{lemma}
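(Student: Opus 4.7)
The plan is to dispose of items 1 and 2 as bookkeeping and concentrate the real work on item 3. For item 1, I would walk once around $v$ using its cyclic adjacency list from the planar embedding, creating the $d$ new nodes $v_1,\ldots,v_d$, inserting the $d-1$ zero-weight path edges, redirecting each $(u_i,v)$ to $(u_i,v_i)$ with the same weight, and deleting $v$; every substep is $O(1)$ per incident edge of $v$, giving total time $O(d)$. For item 2, the $d-1$ newly added path edges all have weight zero, so $\expand$ merges $v_1,\ldots,v_d$ into a single node $v'$; each surviving edge $(v_i,u_i)$ still has weight $w(u_i,v)$ and so expands to an unweighted path of that length attached to $v'$, exactly mirroring how $(u_i,v)$ expands at $v$ in $\expand(G)$. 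Outside $v$'s neighborhood nothing else changes, so $\expand(\reduce(G,v,u_1))$ coincides with $\expand(G)$ up to identifying $v$ with $v'$.

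For item 3, set $G'=\reduce(G,v,u_1)$. I would fix a planar embedding of $G'$ that agrees with $G$'s embedding outside a small topological disk $D$ around $v$, and inside $D$ draws the path $(v_1,\ldots,v_d)$ with each $v_i$ placed near where the old edge $(v,u_i)$ entered $v$ and each path edge $(v_i,v_{i+1})$ lying in the face of $G$ between $(v,u_i)$ and $(v,u_{i+1})$ in cyclic order. The direction $\radius(G)\leq\radius(G')$ is immediate: contracting the zero-weight path back to a single node recovers $G$, and edge contractions in a plane graph cannot increase the outerplane radius. For the opposite direction, the key topological observation is that exactly one of the $d$ faces of $G'$ inherited from $v$'s wedges --- the wedge between $u_d$ and $u_1$, which contains no path edge --- has all of $v_1,\ldots,v_d$ on its boundary; call this the missing-wedge face.

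I would then prove the two bounds $\depth_{G'}(u)=\depth_G(u)$ for every $u\neq v$ and $\depth_{G'}(v_i)\leq\depth_G(v)$ for every $i$. The first holds because the local change inside $D$ does not affect the external-face structure outside $D$. For the second I would track the evolution of the missing-wedge face during peeling: when $\depth_G(v)\geq 2$, the minimality of $\depth_G(u_1)$ forces $\depth_G(u_1)<\depth_G(v)$, so $u_1$ is peeled at some round $\leq\depth_G(v)-1$, at which point the missing-wedge face (which has $u_1$ on its boundary) gets absorbed into the outer face and every $v_i$ on its boundary becomes external, giving $\depth_{G'}(v_i)\leq\depth_G(u_1)+1\leq\depth_G(v)$; when $\depth_G(v)=1$, the embedding should be arranged so that the missing wedge coincides with the outer wedge of $v$, making the $v_i$'s external from round one. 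The main obstacle will be carefully formalizing this face-merge step and verifying that the embedding of $\reduce(G,v,u_1)$ places the missing wedge in the correct position relative to the outer face of $G$; a naive adjacency estimate $\depth_{G'}(v_i)\leq\depth_G(u_i)+1$ is much too loose when $u_i$ is deep, so an embedding-aware face-tracking argument is essential.
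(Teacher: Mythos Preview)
Your handling of items~1 and~2 is fine and matches the paper, which also dismisses them as straightforward.

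For item~3 there is a genuine gap. Your claim that ``when $\depth_G(v)\geq 2$, the minimality of $\depth_G(u_1)$ forces $\depth_G(u_1)<\depth_G(v)$'' is false. Take an outer $8$-cycle $w_1,\ldots,w_8$ and, inside it, a node $v$ with four neighbours $u_1,u_2,u_3,u_4$, where each $u_i$ has degree two and is joined only to $v$ and to $w_{2i}$. After peeling the $w_j$'s the remaining graph is a star, so $v$ and every $u_i$ lie on the outer face simultaneously: $\depth_G(v)=\depth_G(u_i)=2$. In general minimality only gives $\depth_G(u_1)\leq\depth_G(v)$, because when $v$ first becomes external the two neighbours bounding the newly-outer wedge are external at that same level, not earlier. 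When equality holds, your argument that the missing-wedge face is absorbed into the outer face \emph{via the peeling of $u_1$} no longer applies, and the fallback adjacency bound $\depth_{G'}(v_i)\leq\depth_{G'}(u_1)+1$ is off by one. Your separate claim that $\depth_{G'}(u)=\depth_G(u)$ for $u\neq v$ is also asserted rather than proved; ``the local change inside $D$ does not affect the external-face structure outside $D$'' needs an inductive synchronisation of the two peeling processes to be made precise.

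The paper's proof is shorter and sidesteps the case split entirely. Setting $j=\depth_G(v)$, it peels $j-1$ levels from $G'$, asserts (citing only ``the choice of $u_1$'') that every $v_i$ is then external, and observes that one further peel makes the two graphs identical. Your missing-wedge observation is exactly the geometry behind that terse assertion, but what you must actually show is that the wedge between $u_d$ and $u_1$ is part of the outer face after $j-1$ peels, not merely that $u_1$ is eventually peeled. When $\depth_G(u_1)=j-1$ this is immediate (removing $u_1$ merges both adjacent wedges with the outer face); the case $\depth_G(u_1)=j$ requires an additional argument that your current outline does not supply.
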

\begin{proof}
The first two statements are straightforward.  To prove the third
statement, let $j=\depth_G(v)$ and $G'=\reduce(G,v,u_1)$.  Let $G''$
be the plane graph obtained from $G'$ by peeling $j-1$ levels of
external nodes. By the choice of $u_1$, each $v_i$ with $1\leq i\leq
d$ is an external node in $G''$. Therefore, for each $i=1,2,\ldots,d$, we have
$\depth_{G'}(v_i)=j$.  Since the plane graphs obtained from $G$ and
$\reduce(G,v,u_1)$ by peeling $j$ levels of external nodes are
identical, the lemma is proved.
\end{proof}

\subsection{Proving the theorem by the main lemma}
\label{subsection:proving-main-theorem}
This subsection shows that, to prove Theorem~\ref{theorem:theorem1.1},
it suffices to ensure the following lemma.

\begin{lemma}
\label{lemma:lemma2.6}
If $G$ is an $O(1)$-degree plane graph satisfying the following equation
\begin{equation}
\wmax(G)+\radius(G)=O(\density(G))=O(\log^2 |G|),
\label{equation:lemma2.6}
\end{equation}
then $\girth(G)$ can be computed from $G$ in
$O(|G|+|\expand(G)|)$ time.
\end{lemma}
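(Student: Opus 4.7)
The plan is to implement the divide-and-conquer framework outlined in the Overview. The starting point is the observation, which I will verify, that a min-weight simple cycle $C$ of $G$ either (a) has at most one node in any given node set $S$, in which case $C$ lies entirely on one side of the separation, or (b) contains two nodes $u,v\in S$ for which $w(C) = d(u,v)+d(u,v;e(u,v))$, where $e(u,v)$ is any edge on a shortest $u$-$v$ path in $G$. Because $G$ has outerplane radius $O(\log^2 m)$ with $m=|G|$, there is an $O(\log^2 m)$-node separator $S$ whose removal partitions $V(G)\setminus S$ into two non-adjacent sets of roughly equal size, inducing subgraphs $G_1$ and $G_2$. Thus $\girth(G)$ equals the minimum of $\girth(G_1)$, $\girth(G_2)$, and $\min_{u,v\in S}\bigl(d(u,v)+d(u,v;e(u,v))\bigr)$, which I will evaluate by obtaining the triples $(d(u,v),\,e(u,v),\,d(u,v;e(u,v)))$ for pairs in an $O(\log^3 m)$-node superset $\border(S)$ by merging the analogous triples coming from $G_1$ and $G_2$ via dynamic programming.

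I will unroll this recursion for exactly two levels, following the Overview. The first level replaces the problem on $G$ by girth-and-distance subproblems on subgraphs of size $O(\log^{30} m)$; the second level reduces these further to subproblems of size $O((\log\log m)^{30})$. At that scale, the hypothesis on $G$ guarantees that each sub-instance is a plane graph of $O(1)$ degree with $\wmax$ and $\radius$ poly-logarithmic in $m$, and these parameters together with the tiny size force only $o(m)$ combinatorial types of sub-instance. I can therefore precompute, once and in $O(m)$ total time, a lookup table $\lookuptable$ that stores for every type the girth and all needed shortest-path and replacement-edge information; each leaf subproblem is then answered in $O(1)$ time by table lookup (this is the role of Lemma~5.5 referenced in the Overview). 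The balanced separators at both levels are supplied by Goodrich's decomposition tree, and point-to-point distance queries inside the first-level subgraphs are answered by Klein's dynamic distance oracle.

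The main obstacle is making the recursion run inside $O(|G|+|\expand(G)|)$ rather than the $O(|G|\log^2|G|)$ that Lemma~\ref{lemma:lemma2.1} would already give. The argument has to exploit all three hypotheses of~(\ref{equation:lemma2.6}) jointly: $\radius(G)=O(\log^2 m)$ controls the separator size and the $\border(S)$-indexed dynamic-programming tables, $\wmax(G)=O(\log^2 m)$ controls the magnitude of all distance values (so that both the keys of $\lookuptable$ and the data stored by Klein's oracle remain compact), and $\density(G)=O(\log^2 m)$ fixes the ratio between $|G|$ and $|\expand(G)|$, providing the slack against which poly-logarithmic overheads incurred on $G$ can be charged. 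Verifying that the two levels of recursion, the one-shot precomputation of $\lookuptable$, and the dynamic-programming merges across $\border(S)$ at every node of the decomposition tree really fit inside this budget is the heart of the argument, and is distributed across Sections~\ref{section:framework}--\ref{section:task3}.
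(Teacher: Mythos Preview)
Your proposal is correct and follows essentially the same approach as the paper: build an $r$-dissection tree via Goodrich's separator decomposition (Task~1, Lemma~\ref{lemma:lemma4.1}), solve the nonleaf problem by dynamic programming on the $\border(S)$ sets together with Klein's oracle for the non-special leaves and a second-level recursion plus table lookup for the special ones (Task~2, Lemmas~\ref{lemma:lemma5.4}--\ref{lemma:lemma5.6}), solve the leaf problem by a second-level recursion and table lookup (Task~3, Lemma~\ref{lemma:lemma6.1}), and combine via Lemma~\ref{lemma:lemma3.1}. Your accounting observation that $r=O(\density(G))$ yields $mr=O(|\expand(G)|)$ is exactly the budget the paper uses.
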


Now we prove Theorem~\ref{theorem:theorem1.1}.

\begin{proof}[Proof of Theorem~\ref{theorem:theorem1.1}]
Assume without loss of generality that the input $n$-node graph $G_0$
is biconnected.  Let $G$ be an $m$-node biconnected contracted planar
graph with $\expand(G)=G_0$ and $m\leq n$ that can be computed from
$G_0$ in $O(n)$ time, as ensured by
Lemma~\ref{lemma:lemma2.4}(\ref{item:lemma2.4:item1}).  By
Lemma~\ref{lemma:lemma2.2}(\ref{item:lemma2.2:item1}),
$\girth(G)=\girth(G_0)$.  If $n > m \log^2 m$, by
Theorem~\ref{lemma:lemma2.1}, it takes $O(m\log^2 m)=O(n)$ time to
compute $\girth(G)$.  The theorem is proved.  The rest of the proof
assumes $m\leq n \leq m\log^2 m$.

We first equip the $m$-node graph $G$ with a planar embedding, which
is obtainable in $O(m)$ time (see, e.g.,~\cite{BoyerM04}).
Initially, we have $|V(G)|=m$, $|V(\expand(G))|=n$, and
$\density(G)=\frac{n}{m}=O(\log^2 m)$.  We update $G$ in three
$O(m+n)$-time stages which maintain $|V(G)|=\Theta(m)$,
$|V(\expand(G))|=\Theta(n)$, $\girth(G)=\girth(G_0)$, and the
planarity of $G$.  At the end of the third stage, $G$ may contain
zero-weight edges and may no longer be biconnected and contracted.
However, the resulting $G$ is of degree at most three, has nonnegative
weights, and satisfies Equation~(\ref{equation:lemma2.6}).  The
theorem then follows from Lemma~\ref{lemma:lemma2.6}.

{\em Stage~1: Bounding the maximum weight of $G$.}  We repeatedly
replace $G$ by $\round(G, \lceil 36\cdot \density(G)\rceil)$ until
$\wmax(G)\leq \lceil 36\cdot \density(G)\rceil$ holds.  Although
$\density(G)$ may change in each iteration of the weight decreasing,
by Lemmas~\ref{lemma:lemma2.3}
and~\ref{lemma:lemma2.4}(\ref{item:lemma2.4:item2}) we know that
$\girth(G)$ remains the same and $\density(G)$ does not increase.
Since $G$ remains biconnected and contracted and has positive weights,
Lemma~\ref{lemma:lemma2.4}(\ref{item:lemma2.4:item2}) ensures
$\girth(G)\leq 36\cdot \density(G)$ throughout the stage.  After the
first iteration, $\wmax(G)\leq \lceil 36\cdot\frac{n}{m}\rceil$.  Each
of the following iterations decreases $\wmax(G)$ by at least one.
Therefore, this stage has $O(\frac{n}{m})$ iterations, each of which
takes $O(m)$ time, by
Lemma~\ref{lemma:lemma2.2}(\ref{item:lemma2.2:item2}).  The overall
running time is $O(n)$. The resulting $m$-node graph $G$ satisfies
\begin{math}
\wmax(G)=O(\density(G))=O(\log^2 |G|).
\end{math}

{\em Stage~2: Bounding the outerplane radius of $G$.}  For each
positive integer $j$, let $V_j$ consist of the nodes with outerplane
depths $j$ in $G$.  For each integer $i\geq 0$, let $G_i$ be the plane
subgraph of $G$ induced by the union of $V_j$ with $36\cdot i\cdot
\density(G)<j\leq 36\cdot (i+2)\cdot \density(G)$.  Let $G'$ be the
plane graph formed by the disjoint union of all the plane subgraphs
$G_i$ such that the external nodes of each $G_i$ remain external in
$G'$.  We have $\radius(G')=O(\density(G))$.  Each cycle of $G'$ is a
cycle of $G$, so $\girth(G)\leq \girth(G')$.  By
Lemma~\ref{lemma:lemma2.4}(\ref{item:lemma2.4:item2}), we have
$\girth(G)\leq 36\cdot\density(G)$.  Since the weight of each edge of
$G$ is at least one, the overlapping of the subgraphs $G_i$ in $G$
ensures that any cycle $C$ of $G$ with $w(C)=\girth(G)$ lies in some
subgraph $G_i$ of $G$, implying $\girth(G)\geq \girth(G')$.
Therefore, $\girth(G')=\girth(G)$.  By $|V(G')|=\Theta(|V(G)|)$ and
$|V(\expand(G'))|=\Theta(|V(\expand(G))|)$, we have
$\density(G')=\Theta(\density(G))$.  We replace $G$ by $G'$. The
resulting $G$ satisfies $\girth(G)=\girth(G_0)$ and
Equation~(\ref{equation:lemma2.6}).

{\em Stage~3: Bounding the degree of $G$.}  For each node $v$ of $G$
with degree four or more, we find a neighbor $u$ of $v$ in $G$ whose
outerplane depth in $G$ is minimized, and then replace $G$ by
$\reduce(G,v,u)$.  By
Lemma~\ref{lemma:lemma2.5}(\ref{item:lemma2.5:item1}), this stage
takes $O(m)$ time.  At the end, the degree of $G$ is at most three.
By Lemma~\ref{lemma:lemma2.5}(\ref{item:lemma2.5:item2}), the expanded
version of the resulting $G$ is identical to that of the $G$ at the
beginning of this stage.  By
Lemma~\ref{lemma:lemma2.5}(\ref{item:lemma2.5:item3}), the outerplane
radius remains the same.  The number of nodes in $G$ increases by at
most a constant factor.  The maximum weight remains the same.
Therefore, the resulting $G$ satisfies
Equation~(\ref{equation:lemma2.6}).  By
Lemma~\ref{lemma:lemma2.2}(\ref{item:lemma2.2:item1}), we have
$\girth(G)=\girth(G_0)$.
\end{proof}
The rest of the paper proves Lemma~\ref{lemma:lemma2.6}.

\section{Framework: dissection tree, nonleaf problem, and leaf problem}
\label{section:framework}

\begin{figure}[t]
\centerline{\scalebox{0.85}{\input{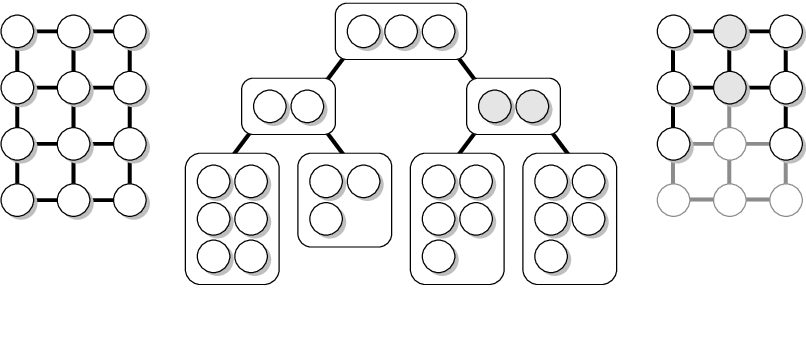}}}
\caption{(a) A weighted plane graph $G$. (b) A dissection tree $T$ 
  of $G$ with $S=\{7,8\}$ and $\border(S)=\{2,7,8,10\}$. 
  (c) Graph $G[\setbelow(S)]$. } 
\label{figure:fig3}
\end{figure}

This section shows the framework of our proof for
Lemma~\ref{lemma:lemma2.6}.  Let $G[S]$ denote the subgraph of $G$
induced by node set $S$.  Let $T$ be a rooted binary tree such that
each member of~$V(T)$ is a subset of $V(G)$.  To avoid confusion, we
use ``nodes'' to specify the members of $V(G)$ and ``vertices'' to
specify the members of $V(T)$.  Let $\treeroot(T)$ denote the root
vertex of $T$. Let $\leaf(T)$ consist of the leaf vertices
of~$T$. Let $\nonleaf(T)$ consist of the nonleaf vertices of~$T$.
For each vertex $S$ of~$T$, let $\setbelow(S)$ denote the union of
the vertices in the subtree of~$T$ rooted at $S$.  Therefore, if~$S$
is a leaf vertex of~$T$, then $\setbelow(S)=S$.  Also,
$\setbelow(\treeroot(T))$ consists of the nodes of $G$ that belong to
some vertex of~$T$.  For each nonleaf vertex $S$ of~$T$, let
$\lchild(S)$ and $\rchild(S)$ denote the two children of $S$ in
$T$. Therefore, if~$S$ is a nonleaf vertex of~$T$, then
\begin{math}
\setbelow(S) = S \cup \setbelow(\lchild(S)) \cup
\setbelow(\rchild(S)).
\end{math}
For instance, let $T$ be the tree in Figure~\ref{figure:fig3}(b).  We
have $\treeroot(T) = \{2,7,10\}$. Let $S=\rchild(\treeroot(T))$.  We
have $S=\{7,8\}$ and $\setbelow(S) = \{2,3,4,7,8,10,11,12\}$.  Let
$L=\lchild(S)$. We have $L=\setbelow(L)=\{2,3,4,7,8\}$.

Node sets $V_1$ and $V_2$ are {\em dissected} by node set $S$ in $G$
if any node in $V_1\setminus S$ and any node in $V_2\setminus S$ are
not adjacent in $G$.  We say that $T$ is a {\em dissection tree} of
$G$ if the following properties hold.
\begin{enumerate}[\quad\ \ $\bullet$\ \ \em Property 1:]
\item 
\label{property:prop1}
$\setbelow(\treeroot(T))=V(G)$.
\item 
\label{property:prop2}
The following statements hold for each nonleaf vertex $S$ of $T$.
\begin{enumerate}[(a)]
\item 
\label{property:prop2a}
$S\subseteq \setbelow(\lchild(S)) \cap \setbelow(\rchild(S))$.
\item 
\label{property:prop2b}
$\setbelow(\lchild(S))$ and $\setbelow(\rchild(S))$ are dissected by
$S$ in~$G$.
\end{enumerate}
\end{enumerate}
For instance, Figure~\ref{figure:fig3}(b) is a dissection tree of the
graph in Figure~\ref{figure:fig3}(a).

For any subset $S$ of $V(G)$, any two distinct nodes $u$ and $v$ of
$S$, and any edge $e$ of $G$, let $d_S(u,v;e)$ denote the distance of
$u$ and $v$ in $G[\setbelow(S)]\setminus \{e\}$ and let $d_S(u,v)$
denote the distance of $u$ and $v$ in $G[\setbelow(S)]$.  Observe that
if $e_S(u,v)$ is an edge in some min-weight path between $u$ and $v$
in $G[\setbelow(S)]$, then $d_S(u,v)+d_S(u,v;e_S(u,v))$ is the minimum
weight of any non-degenerate cycle in $G[\setbelow(S)]$ containing $u$
and $v$ that traverses $e_S(u,v)$ exactly once.  For instance, let $G$
and $T$ be shown in Figure~\ref{figure:fig3}(a) and
\ref{figure:fig3}(b).  If $S = \{7,8\}$, then $G[\setbelow(S)]$ is as
shown in Figure~\ref{figure:fig3}(c).  We have $d_S(7,10) = 7$ (e.g.,
path $(7,8,12,11,10)$ has weight~$7$) and $d_S(7,10;(7,8)) = 10$
(e.g., path $(7,3,4,8,12,11,10)$ has weight $10$).  Since $(7,8)$ is
an edge in a min-weight path $(7,8,12,11,10)$ between nodes $7$ and
$10$, the minimum weight of any non-degenerate cycle in
$G[\setbelow(S)]$ containing nodes $7$ and $10$ that traverses $(7,8)$
exactly once is~$17$ (e.g., non-degenerate
cycle~$(7,8,12,11,10,11,12,8,4,3,7)$ has weight~$17$ and traverses
$(7,8)$ exactly once).

\begin{definition}
{\em For any dissection tree $T$ of graph $G$, the {\em nonleaf
    problem} of $(G,T)$ is to compute the following information for
  each nonleaf vertex $S$ of $T$ and each pair of distinct nodes $u$
  and $v$ of $S$: (1) an edge $e_S(u,v)$ in a min-weight path between
  $u$ and $v$ in $G[\setbelow(S)]$ and (2) distances $d_S(u,v)$ and
  $d_S(u,v;e_S(u,v))$.  }
\end{definition}

\begin{definition}
{\em For any dissection tree $T$ of graph $G$, the {\em leaf problem}
  of $(G,T)$ is to compute the minimum $\girth(G[L])$ over all leaf
  vertices $L$ of $T$.}
\end{definition}

Define the {\em sum of squares} of a dissection tree $T$ as
$$\squares(T)=\sum_{S\in \nonleaf(T)}|S|^2.$$ Our proof for
Lemma~\ref{lemma:lemma2.6} consists of the following three tasks.
\begin{itemize}
\item {\em Task 1}. Computing a dissection tree $T$ of $G$ with
  $\squares(T)=O(|G|)$.

\item {\em Task 2}. Solving the nonleaf problem of $(G,T)$.

\item {\em Task 3}. Solving the leaf problem of $(G,T)$.
\end{itemize}
The following lemma ensures that, to prove Lemma~\ref{lemma:lemma2.6},
it suffices to complete all three tasks in $O(|G|+|\expand(G)|)$ time
for any $O(1)$-degree plane graph $G$ satisfying
Equation~(\ref{equation:lemma2.6}).

\begin{lemma}
\label{lemma:lemma3.1}
Given a dissection tree $T$ of graph $G$ and solutions to the leaf
and nonleaf problems of $(G,T)$, it takes $O(\squares(T))$ time to
compute $\girth(G)$.
\end{lemma}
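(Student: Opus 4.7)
The plan is to express $\girth(G)$ as the minimum of two quantities: the leaf-problem output $X:=\min_{L\in\leaf(T)}\girth(G[L])$ (which is given to us as a single number) and the nonleaf-derived value $Y:=\min_{S\in\nonleaf(T)}\min_{u\neq v\in S}\bigl(d_S(u,v)+d_S(u,v;e_S(u,v))\bigr)$. Once this identity is proved, the algorithm traverses $T$ once; for each nonleaf $S$ it enumerates the $O(|S|^2)$ pairs of distinct nodes in $S$, looks up the values $e_S(u,v)$, $d_S(u,v)$, and $d_S(u,v;e_S(u,v))$ from the nonleaf-problem solution in $O(1)$ time, and updates a running minimum. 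The total time is $\sum_{S\in\nonleaf(T)}O(|S|^2)=O(\squares(T))$.

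For the upper bound $\girth(G)\leq\min(X,Y)$, the inequality $\girth(G)\leq X$ is immediate because every simple cycle of $G[L]$ is a simple cycle of $G$. For $\girth(G)\leq Y$, fix a nonleaf $S$ and distinct $u,v\in S$: concatenating a min-weight $u$-$v$ path in $G[\setbelow(S)]$ (which contains $e_S(u,v)$) with the reverse of a min-weight $u$-$v$ path in $G[\setbelow(S)]\setminus\{e_S(u,v)\}$ yields a closed walk $W$ in $G$ of weight $d_S(u,v)+d_S(u,v;e_S(u,v))$ that traverses $e_S(u,v)$ exactly once. The edge set $H$ consisting of edges used an odd number of times in $W$ has even degree at every vertex and contains $e_S(u,v)$, so $H$ decomposes into edge-disjoint simple cycles, one of which contains $e_S(u,v)$; its weight is at most $w(H)\leq w(W)$, giving $\girth(G)\leq d_S(u,v)+d_S(u,v;e_S(u,v))$.

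For the matching lower bound, I would let $C^{\ast}$ be any min-weight simple cycle of $G$ and pick the deepest vertex $S$ of $T$ with $V(C^{\ast})\subseteq\setbelow(S)$, which exists by Property~1. If $S$ is a leaf, then $C^{\ast}\subseteq G[S]$, so $\girth(G)=w(C^{\ast})\geq\girth(G[S])\geq X$. Otherwise $S$ is a nonleaf, and by depth-maximality $V(C^{\ast})$ meets both $\setbelow(\lchild(S))\setminus S$ and $\setbelow(\rchild(S))\setminus S$; Property~2(b) then forces $C^{\ast}$ to pass through at least two distinct nodes $u,v\in S$. Writing $C^{\ast}=P_1\cup P_2$ as two edge-disjoint $u$-$v$ arcs, we have $d_S(u,v)\leq\min(w(P_1),w(P_2))$, and, because $e_S(u,v)$ can lie on at most one of $P_1,P_2$, $d_S(u,v;e_S(u,v))\leq\max(w(P_1),w(P_2))$. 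Hence $d_S(u,v)+d_S(u,v;e_S(u,v))\leq w(P_1)+w(P_2)=w(C^{\ast})=\girth(G)$, showing $Y\leq\girth(G)$.

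The main obstacle is the Eulerian-subgraph step of the upper bound: one needs the (standard) fact that a closed walk using a specified edge $e$ exactly once contains, in the subgraph of edges it uses an odd number of times, a simple cycle through $e$ of no greater weight. Everything else is direct bookkeeping given the data promised by the leaf and nonleaf problems.
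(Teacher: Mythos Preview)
Your argument is correct and follows essentially the same route as the paper's proof: both compute $g_{\mathit{nonleaf}}=\min_{S}\min_{u\neq v\in S}\bigl(d_S(u,v)+d_S(u,v;e_S(u,v))\bigr)$ in $O(\squares(T))$ time, locate a lowest vertex $S$ of $T$ with $V(C)\subseteq\setbelow(S)$ for a min-weight simple cycle $C$, and split into the leaf and nonleaf cases, using Properties~2(a) and~2(b) to force $|S\cap V(C)|\ge 2$ in the nonleaf case. Your treatment is in fact more explicit than the paper's in two places: you spell out the Eulerian-subgraph step showing $\girth(G)\le Y$ (the paper relies on its earlier one-line observation that $d_S(u,v)+d_S(u,v;e_S(u,v))$ is the weight of a non-degenerate cycle), and your arc-splitting inequality $d_S(u,v)+d_S(u,v;e_S(u,v))\le w(P_1)+w(P_2)$ makes precise what the paper compresses into ``since $C$ is a min-weight non-degenerate cycle of $G[\setbelow(S)]$.''
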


\begin{proof}
Let $g_{\textit{leaf}}$ be the given solution to the leaf problem of
$(G,T)$.  It takes $O(\squares(T))$ time to compute the minimum
value $g_{\textit{nonleaf}}$ of $d_S(u,v)+d_S(u,v;e_S(u,v))$ over all
pairs of distinct nodes $u$ and $v$ of $S$, where $e_S(u,v)$ is the
edge in the given solution to the nonleaf problem of $(G,T)$.  Let
$C$ be a simple cycle of $G$ with $w(C)=\girth(G)$.  It suffices to
show $w(C) = \min\{g_\textit{leaf}, g_\textit{nonleaf}\}$.  By
Property~\ref{property:prop1} of $T$, there is a lowest vertex $S$ of
$T$ with $V(C)\subseteq \setbelow(S)$. If $S$ is a leaf vertex of
$T$, then $w(C)=g_{\textit{leaf}}$.  If $S$ is a nonleaf vertex of
$T$, then $w(C)=\girth(G[\setbelow(S)])$.  We know $|S\cap V(C)|\geq
2$: Assume $|S\cap V(C)|\leq 1$ for contradiction. By
Property~\ref{property:prop2b} and simplicity of $C$, we have
$V(C) \subseteq S \cup \lchild(S)$ or $V(C) \subseteq S \cup
\rchild(S)$.  By Property~\ref{property:prop2a}, either $V(C)
\subseteq \lchild(S)$ or $V(C) \subseteq \rchild(S)$ holds,
contradicting the choice of $S$.  Let $u$ and $v$ be two distinct
nodes in $S\cap V(C)$.  Since $C$ is a min-weight non-degenerate cycle
of $G[\setbelow(S)]$, we have $w(C)=d_S(u,v)+d_S(u,v;e_S(u,v))$.
Therefore, $w(C)=g_{\textit{nonleaf}}$.  The lemma is proved.
\end{proof}

\section{Task 1: computing a dissection tree}
\label{section:task1}
Let $T$ be a dissection tree of graph $G$.  For each vertex $S$ of
$T$, let $\setabove(S)$ be the union of the ancestors of $S$ in $T$
and let $\inherit(S) = \setabove(S) \cap \setbelow(S)$.  If $S$ is a
leaf vertex of $T$, then let $\border(S)=\inherit(S)$.  If $S$ is a
nonleaf vertex of $T$, then let $\border(S)=S\cup \inherit(S)$.  For
instance, let $T$ be as shown in Figure~\ref{figure:fig3}(b).  Let
$S=\rchild(\treeroot(T))$. We have
$\setabove(S)=\inherit(S)=\{2,7,10\}$ and $\border(S)=\{2,7,8,10\}$.
Let $L=\lchild(S)$. We have $\setabove(L)=\{2,7,8,10\}$ and
$\inherit(L)=\border(L)=\{2,7,8\}$.
Define
\begin{displaymath}
\ell(m)=\lceil\log^{30} m\rceil.
\end{displaymath}
For any positive integer $r$, a dissection tree $T$ of an $m$-node
graph $G$ is an {\em $r$-dissection tree} of $G$ if the following 
conditions hold.
\begin{enumerate}[\quad\ \ $\bullet$\ \ \em {Condition} 1:]
\item 
\label{condition:cond1}
$|V(T)|=O(m/\ell(m))$ and $\sum_{L \in \leaf(T)}|\border(L)|=O(mr/\ell(m))$.
\item 
\label{condition:cond2}
$|L| = \Theta(\ell(m))$ and $|\border(L)| = O(r\log m)$ hold for each
leaf vertex $L$ of~$T$.

\item 
\label{condition:cond3}
$|S|+|\border(S)| = O(r\log m)$ hold for each nonleaf vertex $S$ of
$T$.
\end{enumerate}
For any $r$-outerplane $G$, it takes $O(m)$ time to compute an
$O(r)$-node set $S$ of $G$ such that the node subsets $V_1$ and $V_2$
of $G$ dissected by $S$ satisfy
$|V_1|/|V_2|=\Theta(1)$ (see, e.g., \cite{RobertsonS84,Bodlaender98}).  
By recursively applying this linear-time procedure, an $r$-dissection
tree can be obtained in $O(m\log m)$ time, which is too expensive for
our algorithm.  Instead, based upon Goodrich's $O(m)$-time separator
decomposition~\cite{Goodrich95}, we prove the following lemma.

\begin{lemma}
\label{lemma:lemma4.1}
Let $G$ be an $m$-node $r$-outerplane $O(1)$-degree graph with
$r=O(\log^2 m)$.  It takes $O(m)$ time to compute an $r$-dissection
tree of $G$.
\end{lemma}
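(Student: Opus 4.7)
The plan is to build the dissection tree $T$ top-down by recursively applying a balanced-separator routine tailored to the $r$-outerplane, bounded-degree setting: any such subgraph $H$ on $m'$ nodes admits a set $S_H$ of $O(r)$ nodes whose removal dissects $V(H)$ into two parts of size at most $2m'/3$ each. I would attach $S_H$ as an internal vertex of $T$ whose two children are the dissection trees recursively constructed on the two sides (each side augmented with $S_H$ to secure Property~\ref{property:prop2a}), and declare the current subproblem a leaf as soon as it has $\Theta(\ell(m))$ nodes.

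The four conditions of an $r$-dissection tree then fall out from standard counting. Because every leaf holds $\Theta(\ell(m))$ nodes, there are $O(m/\ell(m))$ leaves and hence $O(m/\ell(m))$ vertices in $T$ overall, giving the first half of Condition~\ref{condition:cond1}. The separator bound gives $|S|=O(r)$ for every nonleaf $S$, and any vertex of $T$ has at most $O(\log m)$ ancestors, each contributing at most $O(r)$ nodes to $\inherit$, so $|\inherit(S)|=O(r\log m)$; this yields Conditions~\ref{condition:cond2} and~\ref{condition:cond3}. For the remaining half of Condition~\ref{condition:cond1}, I would bound $\sum_{L\in\leaf(T)}|\border(L)|$ by interchanging the order of summation: each separator node $v\in S_A$ charges to only the leaves $L$ that actually inherit $v$ into $L$, so the sum telescopes into $\sum_{A\in\nonleaf(T)} |S_A|\cdot (\text{number of leaves descended from } A \text{ that retain }v)$, whose geometric growth in depth produces $O(mr/\ell(m))$ in total.

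The principal obstacle is hitting the $O(m)$ time bound rather than the naive $O(m\log m)$: a fresh linear-time separator call at each of the $O(\log m)$ recursion levels is already too expensive. To stay within budget I would plug into Goodrich's $O(m)$-time separator decomposition (Lemma~\ref{lemma:lemma4.2}), which constructs the entire recursion tree in a single amortized pass by sharing planar embeddings, BFS layerings, and peeled outerplane layer indices across nested subproblems. Truncating Goodrich's construction as soon as a subproblem drops to $\Theta(\ell(m))$ nodes only saves work and preserves the balanced-separator output, so the linear-time guarantee carries over. It remains to check that the truncated output satisfies Properties~\ref{property:prop1} and~\ref{property:prop2} of a dissection tree, which is immediate from the separator property at each internal vertex and the fact that $\setbelow(\treeroot(T))=V(G)$ by construction.
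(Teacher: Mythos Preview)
Your plan has the right skeleton but a genuine gap at the step where you pass from Goodrich's output to an $r$-dissection tree, and the gap is exactly where the $O(1)$-degree hypothesis is needed---a hypothesis you never invoke.

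Goodrich's construction (Lemma~\ref{lemma:lemma4.2}) yields a \emph{decomposition} tree $T'$: it satisfies Property~\ref{property:prop2b} but \emph{not} Property~\ref{property:prop2a}, because the separator $S$ is not placed inside both $\setbelow(\lchild(S))$ and $\setbelow(\rchild(S))$. Your fix is to ``augment each side with $S_H$,'' but if you push every separator node into both children recursively, a node of a depth-$d$ separator lands in every one of the $\Theta(m/(2^d\ell(m)))$ leaves below it. Summing over the $O(\log m)$ levels gives $\sum_{L}|\border(L)|=\Theta(mr\log m/\ell(m))$, not the $O(mr/\ell(m))$ that Condition~\ref{condition:cond1} demands. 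Your ``geometric growth in depth'' sentence asserts the bound but does not supply a mechanism for it; there is no geometric decay when every node is pushed to both sides. The missing $\log m$ factor is fatal downstream: it turns the $O(mr)$ bounds in Lemmas~\ref{lemma:lemma5.4} and~\ref{lemma:lemma5.6} into $O(mr\log m)$, which is no longer $O(|\expand(G)|)$.

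The paper's remedy is a post-processing procedure $\descend$ that walks $T'$ top-down and, for each node $u$ in a current nonleaf vertex $S$, pushes $u$ to a child only if $u$ has a neighbour in that child's $\setbelow$, and keeps $u$ in $S$ only if it has neighbours on both sides. The point is that whenever $u$ is kept in $S$ and sent to both children, its degree in each child subgraph is strictly smaller than its degree in $G[\setbelow(S)]$; since $G$ has $O(1)$ degree, $u$ can appear in only $O(1)$ vertices of the final tree. This is what collapses $\sum_L|\border(L)|$ back to $\sum_{S'\in\nonleaf(T')}O(|S'|)=O(mr/\ell(m))$. You need an argument of this kind, and it has to use the bounded-degree assumption.
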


\begin{figure}[t]
\centerline{\scalebox{0.85}{\input{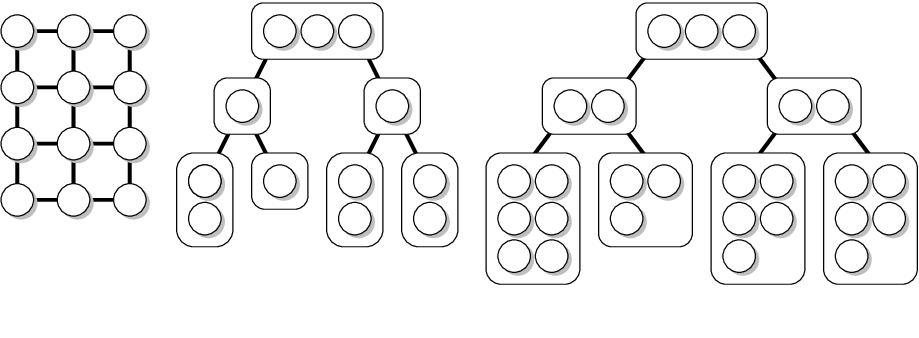}}}
\caption{(a) A plane graph $G$.  (b) A decomposition tree $T'$ of $G$.
(c) A dissection tree $T$ of $G$.}
\label{figure:fig4}
\end{figure}

Let $T'$ be a rooted binary tree such that each
vertex of $T'$ is a subset of $V(G)$.  We say that $T'$ is a {\em
  decomposition tree} of $G$ if Properties~\ref{property:prop1}
and~\ref{property:prop2b} hold for $T'$.  For instance, 
Figure~\ref{figure:fig4}(b) shows a decomposition tree of the graph in
Figure~\ref{figure:fig4}(a).  For any $m$-node triangulated plane
graph $\Delta$ and for any positive integer $\ell\leq m$,
Goodrich~\cite{Goodrich95} showed that it takes $O(m)$ time to compute
an $O(m/\ell)$-vertex $O(\log m)$-height decomposition tree $T'$ of
$\Delta$ such that $|L| = \Theta(\ell)$ holds for each leaf vertex $L$
of $T'$ and $|S|=O(|\setbelow(S)|^{0.5})$ holds for nonleaf vertex
$S$ of $T'$.  As a matter of fact, Goodrich's techniques directly
imply that if an $O(r)$-diameter spanning tree of $\Delta$ is given,
then a decomposition tree $T'$ of $\Delta$ satisfying the following four
conditions can also be obtained efficiently.
\begin{enumerate}[\quad\ \ $\bullet$\ \ \em {Condition} 1':]
\item 
\label{condition:cond1x}
$|V(T')|=O(m/\ell(m))$.
\item 
\label{condition:cond2x}
$|L| = \Theta(\ell(m))$ and $|\border(L)|=0$ hold for each leaf vertex
$L$ of $T'$.
\item 
\label{condition:cond3x}
$|S|=|\border(S)|=O(r)$ holds for each nonleaf vertex $S$ of $T'$.
\item 
\label{condition:cond4x}
The height of $T'$ is $O(\log m)$.
\end{enumerate}
\begin{lemma}
\label{lemma:lemma4.2}
Given an $O(r)$-diameter spanning tree of an $m$-node simple
triangulated plane graph $\Delta$ with $r=O(\log^2 m)$, it takes
$O(m)$ time to compute a decomposition tree $T'$ of $\Delta$ that
satisfies Properties~\ref{property:prop1} and~\ref{property:prop2b}
and
Conditions~\ref{condition:cond1x}',~\ref{condition:cond2x}',~\ref{condition:cond3x}',
and~\ref{condition:cond4x}'.
\end{lemma}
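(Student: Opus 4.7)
The plan is to build $T'$ top-down by recursive Lipton--Tarjan-style planar separation, with each separator realized as a fundamental cycle of the given spanning tree, stopping the recursion once subproblems have $\Theta(\ell(m))$ nodes.

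First I would precompute in $O(m)$ time the data structures Goodrich needs: link--cut trees on the spanning tree together with a topology-tree representation of the nontree edges, which in combination support, for any sub-instance (a triangulated plane subgraph $\Delta_S$ equipped with an $O(r)$-diameter spanning tree $T_S$), the following operations in $O(r \log^{O(1)} m)$ time: (i) enumerate the fundamental cycles of $T_S$ together with counts of nodes inside and outside each cycle; (ii) pick the fundamental cycle $C$ minimizing the larger side; (iii) split $\Delta_S$ and $T_S$ along $C$ into two independent sub-instances of the same kind.

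Second I would run the recursion. At each call on $(\Delta_S, T_S)$ with $|V(\Delta_S)| > \ell(m)$, set the current vertex of $T'$ to $S := C$ for the cycle $C$ obtained above; by Lipton and Tarjan the two sides $V_1,V_2$ of $C$ satisfy $|V_1|/|V_2| = \Theta(1)$, and $|C| = O(r)$ because $C$ is a tree path in $T_S$ closed by a single nontree edge and $T_S$ has diameter $O(r)$. Recurse on $(\Delta[V_1], T_{V_1})$ and $(\Delta[V_2], T_{V_2})$; when a sub-instance drops to at most $\ell(m)$ nodes, declare it a leaf of $T'$. The required conditions then follow quickly: Properties~\ref{property:prop1} and~\ref{property:prop2b} are built in by construction; Condition~\ref{condition:cond4x}' comes from the constant-fraction balance and hence $O(\log m)$ recursion depth; Condition~\ref{condition:cond3x}' comes from $|C|=O(r)$; Conditions~\ref{condition:cond1x}' and~\ref{condition:cond2x}' come from the leaf-size threshold $\Theta(\ell(m))$, which yields $O(m/\ell(m))$ leaves and internal vertices. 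The total running time is $O(m)$ since each of the $O(m/\ell(m))$ internal vertices costs only $O(r\log^{O(1)} m)$, and with $r = O(\log^2 m)$ and $\ell(m) = \lceil\log^{30} m\rceil$ this aggregates to $O(m)$.

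The main obstacle I expect is packaging the three dynamic operations (cycle enumeration with side-counts, balanced-cycle selection, split into sub-instances) into a single data structure supporting them in $O(r\log^{O(1)} m)$ time per split while simultaneously preserving the $O(r)$-diameter guarantee on the sub-spanning-trees. Goodrich's combination of link--cut trees with topology trees is what delivers this; the delicate point is that the sub-spanning-trees inherit, rather than enlarge, the $O(r)$-diameter bound after each split, so the cost analysis can re-apply inductively down the $O(\log m)$ levels of the recursion.
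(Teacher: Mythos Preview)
Your proposal is correct and matches the paper's approach essentially step for step: the paper's justification (given in the paragraph immediately preceding the lemma) likewise invokes Goodrich's $O(m)$-time pre-computable data structures, finds in $O(r\log^{O(1)} m)$ time a fundamental cycle $C$ of the $O(r)$-diameter spanning tree that minimizes the larger side, appeals to Lipton--Tarjan for the $|V_1|/|V_2|=\Theta(1)$ balance, splits both $\Delta$ and the spanning tree along $C$ while preserving the $O(r)$-diameter bound, and recurses until sub-instances have at most $\ell(m)$ nodes. You have also correctly singled out the one genuinely delicate point---that the sub-spanning-trees retain the $O(r)$-diameter bound after each split---which is exactly what the paper flags as being supplied by Goodrich's machinery.
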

\begin{proof}
The lemma can be proved by following what Goodrich did
in~\cite{Goodrich95}, so we give only a proof sketch here.
Goodrich~\cite[\S2.4]{Goodrich95} showed that, with some $O(m)$-time
pre-computable dynamic data structures for the given $O(r)$-diameter
spanning tree and $\Delta$, it takes $O(r\log^{O(1)} m)$ time to find
a fundamental cycle $C$ of $\Delta$ with respect to the given spanning
tree such that the maximum number of nodes either inside or outside
$C$ is minimized.  Since the diameter of the given spanning tree is
$O(r)$, we have $|C|=O(r)$.  Let $V_1$ (respectively, $V_2$) consist
of the nodes of $\Delta$ inside (respectively, outside)~$C$.  We have
$|V_1|/|V_2|=\Theta(1)$, as shown by Lipton and
Tarjan~\cite{LiptonT79}.  With the pre-computed data structures, it
also takes $O(r\log^{O(1)} m)$ time to (1) split $\Delta$ into
$\Delta[V_1]$ and $\Delta[V_2]$ and (2) split the given
$O(r)$-diameter spanning tree of $\Delta$ into an $O(r)$-diameter
spanning tree of $\Delta[V_1]$ and an $O(r)$-diameter spanning tree of
$\Delta[V_2]$.  Let $T'$ be obtained by recursively computing
$O(r)$-node sets $\lchild(S)$ and $\rchild(S)$ of $\Delta[V_1]$ and
$\Delta[V_2]$ until $|S|\leq \ell(m)$.  As long as
$r=O(m^{1-\epsilon})$ holds for some constant $\epsilon>0$, the
overall running time is $O(m)$.  One can verify that the resulting
tree $T'$ indeed satisfies Properties~\ref{property:prop1}
and~\ref{property:prop2b} and
Conditions~\ref{condition:cond1x}',~\ref{condition:cond2x}',~\ref{condition:cond3x}',
and~\ref{condition:cond4x}'.
\end{proof}

We prove Lemma~\ref{lemma:lemma4.1} using Lemma~\ref{lemma:lemma4.2}.

\begin{proof}[Proof of Lemma~\ref{lemma:lemma4.1}]
It takes $O(m)$ time to triangulate the $m$-node $r$-outerplane graph
$G$ into an $m$-node simple triangulated plane graph $\Delta$ that
admits a spanning tree with diameter $O(r)$.  Specifically, we first
triangulate each connected component of $G$ into a simple biconnected
internally triangulated plane graph $G'$ such that the outerplane depth of
each node remains the same after the triangulation.  
Let $u_0$ be an arbitrary external node of $G'$. We
then add an edge $(u_0,u)$ for each external node $u$ of $G'$ that is
not adjacent to $u_0$. The resulting graph $\Delta$ is an $m$-node
$O(r)$-outerplane simple triangulated plane graph. 
An $O(r)$-diameter spanning tree of
$\Delta$ can be obtained in $O(m)$ time as follows. Let $u_0$ be the
parent of all of its neighbors in $\Delta$.  For each node $u$ other
than $u_0$ and the neighbors of $u_0$, we arbitrary choose a neighbor
$v$ of $u$ in $\Delta$ with $\depth_\Delta(v)=\depth_\Delta(u)-1$ and
let $v$ be the parent of $u$ in the spanning tree. The diameter of the
resulting spanning tree of $\Delta$ is $O(r)$.  For instance, let $G$
be as shown in Figure~\ref{figure:fig5}(a).  An example of $G'$ is
shown in Figure~\ref{figure:fig5}(b).  An example of $\Delta$ together
with its spanning tree rooted at $u_0$ is shown in
Figure~\ref{figure:fig5}(c).

\begin{figure}[t]
\centerline{\scalebox{0.85}{\input{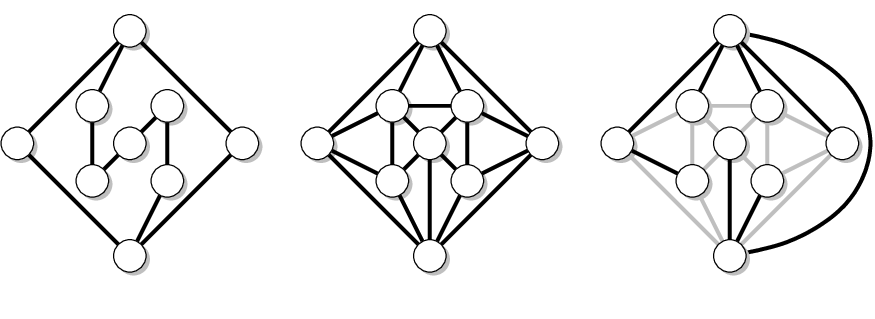}}}
\caption{(a) A plane graph $G$. Each node is labeled by its outerplane
  depth.  (b) A biconnected internally triangulated plane graph $G'$
  obtained from $G$.  (c) A triangulated plane graph $\Delta$ obtained
  from $G'$ with a spanning tree of $\Delta$ rooted at $u_0$.}
\label{figure:fig5}
\end{figure}

Let $T'$ be a decomposition tree of $\Delta$ as ensured by
Lemma~\ref{lemma:lemma4.2}. Since $\Delta$ is obtained from $G$ by
adding edges, $T'$ is also a decomposition tree of $G$ that satisfies
Properties~\ref{property:prop1} and~\ref{property:prop2b} and
Conditions~\ref{condition:cond1x}',~\ref{condition:cond2x}',~\ref{condition:cond3x}',
and~\ref{condition:cond4x}'.  We prove the lemma by showing that $T'$
can be modified in $O(m)$ time into an $r$-dissection tree $T$ of $G$
by calling $\descend(\treeroot(T'))$, where the recursive procedure
$\descend(S)$ is defined as follows.  If $S$ is a leaf vertex of
$T'$, then we return. If $S$ is a nonleaf vertex of $T'$, we first
(1) run the following steps for each node $u$ of the current $S$, and
then (2) recursively call $\descend(\lchild(S))$ and $\descend(\rchild(S))$.
\begin{enumerate}[\em Step 1.]
\item If $u$ is not adjacent to any node in the current $\setbelow(\lchild(S))$ in $G$,
  then we delete $u$ from $S$ and insert $u$ into the current $\rchild(S)$.

\item If $u$ is adjacent to some node in the current $\setbelow(\lchild(S))$ in $G$
  and is not adjacent to any node in the current $\setbelow(\rchild(S))$ in $G$,
  then we delete $u$ from $S$ and insert $u$ into the current
  $\lchild(S)$.

\item If $u$ is adjacent to some node in the current $\setbelow(\lchild(S))$ and
  some node in the current $\setbelow(\rchild(S))$ in $G$, then we leave $u$ in $S$
  and insert $u$ into the current $\lchild(S)$ and $\rchild(S)$.
\end{enumerate}
For instance, if the decomposition tree $T'$ is as shown in
Figure~\ref{figure:fig4}(b), then the resulting tree $T$ of running
$\descend(\treeroot(T'))$ is as shown in Figure~\ref{figure:fig4}(c).

We show that $T$ is indeed an $r$-dissection tree of $G$.  By definition 
of $\descend$, one can verify that a node $u$ belongs to a nonleaf vertex 
$S$ of $T$ if and only if $u$ belongs to both $\setbelow(\lchild(S))$ and
$\setbelow(\rchild(S))$ in $T$.  Property~\ref{property:prop2a} holds
for $T$ and, thereby, Properties~\ref{property:prop1}
and~\ref{property:prop2} of $T$ follow from
Properties~\ref{property:prop1} and~\ref{property:prop2b} of $T'$.
Moreover, if $u$ belongs to a nonleaf vertex $S$ of $T$, then 
the degrees of $u$ in $G[\setbelow(\lchild(S))]$ and
$G[\setbelow(\rchild(S))]$ are strictly less than the degree of $u$ in
$G[\setbelow(S)]$.  Since the degree of $G$ is $O(1)$, each node $u$
of $G$ belongs to $O(1)$ vertices of $T$. By
Conditions~\ref{condition:cond1x}' and~\ref{condition:cond3x}' of
$T'$, we have $\sum_{L\in\leaf(T)}|\border(L)|=
\sum_{S\in\nonleaf(T')}O(|S|)=O(mr/\ell(m))$ and
$|V(T)|=|V(T')|=O(m/\ell(m))$.  Condition~\ref{condition:cond1} of
$T$ holds.  By Conditions~\ref{condition:cond3x}'
and~\ref{condition:cond4x}' of $T'$, the procedure increases $|S|$
and $|\border(S)|$ for each vertex $S$ of $T'$ by $O(r\log m)$.
Therefore, Conditions~\ref{condition:cond2} and~\ref{condition:cond3}
of $T$ follow from Conditions~\ref{condition:cond2x}'
and~\ref{condition:cond3x}' of $T'$.

We show that $T$ can be obtained from $T'$ in $O(m)$ time.  We first
spend $O(m)$ time to compute for each node $v$ of $G$ a list of $O(1)$
vertices of the original $T'$ that contain $v$.  Consider the case
that $S$ is a nonleaf vertex of the current $T'$. Let $S'$ be a child
vertex of $S$ in the current $T'$.  To determine whether a node $u$ of
$S$ is adjacent to some node in the current $\setbelow(S')$, for all
$O(1)$ neighbors $v$ of $u$ in $G$, we traverse upward in $T'$ from
the $O(1)$ vertices of $T'$ that currently contain $v$. The traversal
passes $S'$ if and only if $u$ is adjacent to some node in the current
$\setbelow(S')$.  By Condition~\ref{condition:cond4x}' of $T'$, it
takes $O(\log m)$ time to determine whether $u$ is adjacent to the
current $\setbelow(S')$.  Each update to the list of vertices of $T'$
that contains $u$ takes $O(1)$ time.  By
Conditions~\ref{condition:cond1x}',~\ref{condition:cond3x}',
and~\ref{condition:cond4x}' of $T'$, the overall running time of
$\descend(\treeroot(T'))$ is $O(mr\log^2 m/\ell(m))=O(m)$. The lemma
is proved.
\end{proof}

\section{Task 2: solving the nonleaf problems}
\label{section:task2}
This section proves the following lemma.
\begin{lemma}
\label{lemma:lemma5.1}
Let $G$ be an $m$-node $O(1)$-degree $r$-outerplane graph with
$\wmax(G)+r=O(\log^2 m)$.  Given an $r$-dissection tree $T$ of $G$,
the nonleaf problem of $(G,T)$ can be solved in $O(mr)$ time.
\end{lemma}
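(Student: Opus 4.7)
The plan is a bottom-up dynamic programming on the dissection tree $T$. At each vertex $S$ of $T$, I will compute a table storing, for every ordered pair of distinct nodes $(u,v)$ in $\border(S)$, the triple $(e_S(u,v),\, d_S(u,v),\, d_S(u,v;e_S(u,v)))$. Since every nonleaf vertex $S$ satisfies $S\subseteq \border(S)$, the information demanded by the nonleaf problem is a subset of what this table holds. The leaves of $T$ will be handled by invoking the framework recursively (this is the second-level reduction previewed in the overview); the main content of the proof is the combining step at a nonleaf vertex.

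At the combining step for a nonleaf vertex $S$ with children $L'=\lchild(S)$ and $R'=\rchild(S)$, I exploit Property~\ref{property:prop2b}: $\setbelow(L')$ and $\setbelow(R')$ are dissected by $S$, so any path in $G[\setbelow(S)]$ between two nodes of $\border(S)$ decomposes into maximal segments that each lie entirely inside $G[\setbelow(L')]$ or entirely inside $G[\setbelow(R')]$, with consecutive segments glued at nodes of $S$. I therefore form an auxiliary weighted graph $H_S$ on the vertex set $\border(L')\cup \border(R')$, which contains $\border(S)$, whose edges have weights taken from the already-computed child tables $d_{L'}$ and $d_{R'}$. An all-pairs shortest-paths computation on $H_S$ yields $d_S(u,v)$ for all $u,v\in\border(S)$; by recording a first edge of each shortest path in $H_S$ and recursively unpacking it into the child's recorded edge, I recover an actual edge $e_S(u,v)$ of $G$.

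The subtle part is computing the edge-removed distances $d_S(u,v;e_S(u,v))$. The key observation I will use is that the edge $e_S(u,v)$ is by construction inherited from some descendant: it equals $e_{S_0}(x,y)$ for some descendant $S_0$ of $S$ and some $x,y\in\border(S_0)$, and the value $d_{S_0}(x,y;e_{S_0}(x,y))$ was already stored when $S_0$ was processed. Hence I can re-run the shortest-path computation in $H_S$ on a slightly modified weighting, replacing the single auxiliary edge that encodes $d_{S_0}(x,y)$ with the weight $d_{S_0}(x,y;e_{S_0}(x,y))$ before relaxing; the resulting $u$-to-$v$ distance is exactly $d_S(u,v;e_S(u,v))$. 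This avoids ever re-examining $G[\setbelow(S)]$ at the edge level while propagating removal information consistently through the tree.

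For the time analysis, Conditions~\ref{condition:cond2} and~\ref{condition:cond3} give $|\border(S)|=O(r\log m)=O(\log^3 m)$ at every nonleaf, so each nonleaf combining step costs $O(|\border(S)|^3)$ for the APSP plus $O(|\border(S)|^2)$ edge-removal reruns of the same cost, summing over the $O(m/\ell(m))$ nonleaf vertices to well below the target $O(mr)$. The leaves, by Condition~\ref{condition:cond2}, induce graphs of size $\Theta(\ell(m))=\Theta(\log^{30} m)$ that still satisfy the outerplane-radius and maximum-weight hypotheses, so the same framework applies one more level down; the main obstacle will be controlling the leaf cost, which I expect to be the dominant term, and ensuring that the two-level propagation of $e_S$ and its removal distance composes correctly while the aggregate work stays within $O(mr)$. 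I plan to accomplish this by arranging the leaf sub-instances to be small enough that their nonleaf and leaf subproblems can be answered through the precomputed oracle alluded to in the overview, so that no leaf processing exceeds the budget allotted by the charge $O(r)$ per node of $G$.
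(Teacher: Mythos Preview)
Your combining step does not correctly compute $d_S(u,v;e_S(u,v))$. You set $e_S(u,v):=e_{L'}(u,x_1)$ where $(u,x_1)$ is the first auxiliary edge of the shortest $u$--$v$ path in $H_S$, and then raise only the weight of that single $H_S$-edge to $d_{L'}(u,x_1;e_{L'}(u,x_1))$ before rerunning shortest paths. But every other auxiliary edge $(u,x')$ out of $u$ still carries weight $d_{L'}(u,x')$ (or $d_{R'}(u,x')$), and the child's shortest $u$--$x'$ path realizing that weight may itself begin with the very same $G$-edge $e_S(u,v)$; your modified $H_S$ then admits a $u$--$v$ path whose $G$-realization still traverses $e_S(u,v)$, so the value you output can be strictly smaller than the true $d_S(u,v;e_S(u,v))$. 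Since the nonleaf problem demands this distance exactly, your procedure does not solve it. The paper's remedy is to store strictly more: for every ordered pair $(u,v)$ in $\border(S)$ and for \emph{every} edge $e$ of $G$ incident to $u$ (only $O(1)$ of them, by the degree bound), it records $d_S(u,v;e)$. With that richer ``border problem'' table, the combining step can replace \emph{all} auxiliary edges leaving $u$ by their $e$-avoiding weights simultaneously, and these are now available from the children's tables.

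A secondary gap is your uniform recursion on the leaves. Condition~\ref{condition:cond2} only gives $|\border(L)|=O(r\log m)$, which may far exceed $\log^2\ell(m)$; after folding $\border(L)$ into the root of the sub-tree, the second-level instance then violates the $O(\log^2 m_L)$ hypothesis you need to recurse. The paper separates the leaves: ``special'' ones, with $|\border(L)|+r\le\lceil\log^2\ell(m)\rceil$, admit the recursion you sketch; the remaining non-special leaves have large borders, so by Condition~\ref{condition:cond1} there are only $O(mr/(\ell(m)\log^2\ell(m)))$ of them, few enough that Klein's $O(\ell\log^2\ell)$-time dynamic distance oracle can be run on each one directly.
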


\begin{definition}
{\em Let $T$ be a dissection tree of $G$.  Let $S$ be a vertex of
  $T$.  The {\em border problem} of $(G,T)$ for $S$ is to compute
  the following information for any two distinct nodes $u$ and $v$ of
  $\border(S)$: (1) $d_S(u,v)$, (2) an edge $e_S(u,v)$ on some
  min-weight path between $u$ and $v$ in $G[\setbelow(S)]$ that is
  incident to $u$, and (3) $d_S(u,v;e)$ for each edge $e$ of $G$
  incident to $u$.  }
\end{definition}
Since $S\subseteq \border(S)$ holds for each nonleaf vertex $S$ of
$T$, any collection of solutions to the border problems of $(G,T)$
for all nonleaf vertices of $T$ yields a solution to the nonleaf
problem of $(G,T)$.  We prove Lemma~\ref{lemma:lemma5.1} by solving
the border problems of $(G,T)$ for all vertices of $T$ in $O(mr)$
time.
A leaf vertex $L$ in an $r$-dissection tree $T$ of an $m$-node graph
$G$ is {\em special} if
\begin{displaymath}
|\border(L)|+r\leq \lceil \log^2\ell(m)
\rceil.
\end{displaymath}
Section~\ref{subsection:nonleaf} shows that the border problems of
$(G,T)$ for all vertices of $T$ can be reduced in $O(mr)$ time to the
border problems of $(G,T)$ for all special leaf vertices of $T$, as
summarized by Lemma~\ref{lemma:lemma5.4}.
Section~\ref{subsection:special} shows that the border problems of
$(G,T)$ for all special leaf vertices of $T$ can be solved in $O(mr)$
time, as summarized by Lemma~\ref{lemma:lemma5.6}.
Lemma~\ref{lemma:lemma5.1} follows immediately from
Lemmas~\ref{lemma:lemma5.4} and~\ref{lemma:lemma5.6}.

\subsection{A reduction to the border problems for the special leaf vertices}
\label{subsection:nonleaf}

Our reduction uses the following dynamic data structure that supports
distance queries.

\begin{lemma}[Klein~\cite{Klein05}]
\label{lemma:lemma5.2}
Let $G$ be an $\ell$-node planar graph.  It takes $O(\ell\log^2 \ell)$
time to compute a data structure $\oracle(G)$ such that each update to the
weight of an edge and each query to the distance between any two nodes
in $G$ can be supported by $\oracle(G)$ in time $O(\ell^{2/3}\log^{5/3}
\ell)=O(\ell^{7/10})$.
\end{lemma}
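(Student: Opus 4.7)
The plan is to build a two-level data structure based on Frederickson's $r$-division of planar graphs, which Klein wraps around his own multiple-source shortest paths (MSSP) subroutine. First I would apply the planar separator theorem recursively to partition $G$ into $O(\ell/r)$ pieces, each containing at most $r$ nodes and sharing at most $O(\sqrt{r})$ \emph{boundary} nodes with the remaining pieces; the boundary of each piece can moreover be placed on $O(1)$ faces. The parameter $r$ will be chosen at the very end.

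For each piece $P$, I would compute and maintain its \emph{dense distance graph} $\mathrm{DDG}(P)$: a complete directed graph on the boundary of $P$ whose arc weights are the pairwise shortest-path distances inside $P$. Because the boundary lies on $O(1)$ faces, Klein's MSSP algorithm computes all of $\mathrm{DDG}(P)$ in $O(r \log r)$ time, so building the whole family of DDGs costs $O(\ell \log r)$. On top of each $\mathrm{DDG}(P)$ I would erect the Fakcharoenphol--Rao search structure that exploits the Monge property of the submatrices of $\mathrm{DDG}(P)$ induced by boundary vertices on a common face; building these structures accounts for the extra $\log \ell$ factor, giving the stated $O(\ell \log^2 \ell)$ preprocessing time.

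For an update to the weight of an edge $e$, only the unique piece $P_e$ containing $e$ requires attention: I would recompute $\mathrm{DDG}(P_e)$ from scratch by one call to MSSP and rebuild its Monge search structure, at total cost $O(r \log^2 r)$. For a distance query between $u$ and $v$, I would run FR-Dijkstra on the reduced multigraph consisting of (i) the internal edges of the pieces containing $u$ and $v$, and (ii) the arcs of $\mathrm{DDG}(P)$ for every piece $P$. This search has $O(\ell / \sqrt{r})$ vertices, and the Monge structure allows each settled vertex to relax all of its DDG-neighbors in amortized polylogarithmic time, yielding a query bound of $O((\ell/\sqrt{r})\,\log^{5/3} \ell)$. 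Balancing the update cost $O(r \log^2 r)$ against the query cost by setting $r = \ell^{2/3}$ makes both sides $O(\ell^{2/3} \log^{5/3} \ell) = O(\ell^{7/10})$, as required.

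The main obstacle, and the technical core of the proof, is the efficient implementation of FR-Dijkstra on top of the Monge-decomposable $\mathrm{DDG}(P)$'s: one needs a priority queue that, when a boundary node $x$ of some piece $P$ is settled, can update the tentative distances of all other boundary nodes of $P$ in total amortized polylogarithmic time, rather than the naive $\Theta(\sqrt{r})$ per relaxation. This is precisely what the Monge heap of Fakcharoenphol and Rao provides, and I would cite it as a black box rather than reconstruct it. Given that tool, the bookkeeping for updates (refreshing one DDG and its Monge heap) and for queries (a single FR-Dijkstra run on $O(\ell / \sqrt{r})$ vertices) slots into the analysis above and delivers Lemma~\ref{lemma:lemma5.2}.
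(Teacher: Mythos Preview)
The paper does not prove Lemma~\ref{lemma:lemma5.2} at all: it is stated with the attribution ``[Klein~\cite{Klein05}]'' and invoked as a black box, with no argument supplied. So there is no in-paper proof to compare your proposal against.

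That said, your sketch is a faithful outline of the construction underlying Klein's result: an $r$-division into pieces with $O(1)$ holes, dense distance graphs built per piece by MSSP, Fakcharoenphol--Rao Monge heaps for fast Dijkstra over the boundary graph, piece-local rebuilding on an edge-weight update, and the choice $r=\ell^{2/3}$ to balance update against query. If you actually need to supply a proof here, this is the right plan; the only soft spot is that you assert the query cost is $O((\ell/\sqrt{r})\log^{5/3}\ell)$ without deriving the $5/3$ exponent, and the preprocessing accounting (why the Monge structures cost exactly one extra $\log$ factor) is hand-waved. For the purposes of this paper, though, citing Klein is all that is expected, and your reconstruction goes well beyond what the authors do.
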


The following lemma is needed to ensure the correctness of our reduction 
via dynamic programming.

\begin{lemma}
\label{lemma:lemma5.3}
For each nonleaf vertex $S$ of $T$, we have $S\subseteq
\border(\lchild(S))\cap \border(\rchild(S))$ and $\border(S)\subseteq
\border(\lchild(S))\cup \border(\rchild(S))$.
\end{lemma}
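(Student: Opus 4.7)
The plan is to unfold the definitions and lean on Property~\ref{property:prop2a} together with the straightforward tree identity $\setabove(\lchild(S))=S\cup\setabove(S)$ (and symmetrically for $\rchild(S)$), which holds because the ancestors of a child of $S$ in~$T$ are exactly $S$ together with the ancestors of $S$.

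For the first inclusion $S\subseteq \border(\lchild(S))\cap\border(\rchild(S))$, I would argue as follows. By Property~\ref{property:prop2a}, $S\subseteq\setbelow(\lchild(S))\cap\setbelow(\rchild(S))$. On the other hand, since $\lchild(S)$ and $\rchild(S)$ are children of $S$ in $T$, the vertex $S$ itself is an ancestor of both, so $S\subseteq\setabove(\lchild(S))\cap\setabove(\rchild(S))$. Intersecting yields $S\subseteq\inherit(\lchild(S))\cap\inherit(\rchild(S))$, and since $\inherit(\cdot)\subseteq\border(\cdot)$ in both the leaf and the nonleaf definitions of $\border$, the first inclusion follows.

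For the second inclusion $\border(S)\subseteq\border(\lchild(S))\cup\border(\rchild(S))$, I would split $\border(S)=S\cup\inherit(S)$ into two pieces. The $S$ piece is already handled by the first inclusion. For a node $u\in\inherit(S)=\setabove(S)\cap\setbelow(S)$, use $\setbelow(S)=S\cup\setbelow(\lchild(S))\cup\setbelow(\rchild(S))$ to split into cases. If $u\in S$, apply the first inclusion. If $u\in\setbelow(\lchild(S))$, then the identity $\setabove(\lchild(S))=S\cup\setabove(S)$ gives $u\in\setabove(\lchild(S))$ (using $u\in\setabove(S)$), so $u\in\inherit(\lchild(S))\subseteq\border(\lchild(S))$. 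The case $u\in\setbelow(\rchild(S))$ is symmetric.

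There is no substantive obstacle; the argument is purely a definition-chase, and the only thing to be careful about is the leaf/nonleaf distinction in the definition of $\border$, which is handled uniformly by the observation $\inherit(\cdot)\subseteq\border(\cdot)$.
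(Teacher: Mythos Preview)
Your proposal is correct and follows essentially the same approach as the paper's proof: both use Property~\ref{property:prop2a} together with $S\subseteq\setabove(\lchild(S))\cap\setabove(\rchild(S))$ for the first inclusion, and for the second both decompose $\setbelow(S)=S\cup\setbelow(\lchild(S))\cup\setbelow(\rchild(S))$ and use $\setabove(S)\subseteq\setabove(\lchild(S))$ (equivalently your identity $\setabove(\lchild(S))=S\cup\setabove(S)$) to land in $\inherit(\lchild(S))\cup\inherit(\rchild(S))$. The only difference is cosmetic: the paper writes the second inclusion as a chain of set inclusions on $\inherit(S)\setminus S$, whereas you do the same computation element-wise via a case split.
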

\begin{proof}
Let $S'=\lchild(S)$ and $S''=\rchild(S)$.  By
Property~\ref{property:prop2a} of $T$, $S\subseteq \setbelow(S')\cap
\setbelow(S'')$.  By $S\subseteq \setabove(S')\cap\setabove(S'')$, we
have $S\subseteq \inherit(S')\cap\inherit(S'')$.  By
$\inherit(S')\subseteq\border(S')$ and
$\inherit(S'')\subseteq\border(S'')$, we have
$S\subseteq\border(S')\cap\border(S'')$.  We also have
\begin{eqnarray*}
\inherit(S)\setminus S
&=&((\setbelow(S')\cup\setbelow(S'')\cup S)\cap \setabove(S))\setminus S\\
&\subseteq&(\setbelow(S')\cup\setbelow(S''))\cap \setabove(S)\\
&=&(\setbelow(S') \cap \setabove(S))
\cup (\setbelow(S'') \cap \setabove(S))\\
&\subseteq&(\setbelow(S') \cap \setabove(S')) 
\cup (\setbelow(S'') \cap \setabove(S''))\\
&=&\inherit(S')\cup\inherit(S'')\\
&\subseteq&\border(S')\cup\border(S'').
\end{eqnarray*}
Thus, $\border(S)=S\cup (\inherit(S)\setminus
S)\subseteq \border(S')\cup\border(S'')$.  The lemma
is proved.
\end{proof}

The following lemma shows the reduction.

\begin{lemma}
\label{lemma:lemma5.4}
Let $G$ be an $m$-node $O(1)$-degree graph.  Given (1) an
$r$-dissection tree $T$ of $G$ with $r=O(\log^2 m)$ and (2) solutions
to the border problems of $(G,T)$ for all special leaf vertices of
$T$, it takes $O(mr)$ time to solve the border problems of $(G,T)$
for all vertices of $T$.
\end{lemma}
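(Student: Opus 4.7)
The plan is to compute the border solutions bottom-up along $T$ in post-order: each vertex $S$ of $T$ is processed after both of its children. Two leaf cases arise. At a special leaf the solution is supplied by hypothesis. At a non-special leaf $L$, I would instantiate Klein's oracle (Lemma~\ref{lemma:lemma5.2}) on $G[L]$, which has $\Theta(\ell(m))$ nodes, and answer the necessary $O(|\border(L)|^2)$ distance queries, each in time $O(\ell(m)^{7/10})$. Since $G$ has constant degree, the $d_L(u,v;e)$-queries reduce to $O(1)$ oracle updates per $u$ (temporarily raising $w(e)$ to $+\infty$) followed by the distance queries.

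At a nonleaf vertex $S$ with children $S'=\lchild(S)$ and $S''=\rchild(S)$, I would build a \emph{skeleton graph} $H_S$ on vertex set $\border(S')\cup\border(S'')$. By Lemma~\ref{lemma:lemma5.3}, $\border(S)\subseteq\border(S')\cup\border(S'')$, so every node I need to reason about is already present in $H_S$. For each pair $a,b\in\border(S')$, the edge $(a,b)$ of $H_S$ receives weight $d_{S'}(a,b)$ from the already-computed border solution at $S'$; the $\border(S'')$-side is treated analogously. Property~\ref{property:prop2b} of $T$ ensures that any path in $G[\setbelow(S)]$ between two border nodes of $S$ decomposes at nodes of $S\subseteq\border(S')\cap\border(S'')$ into sub-paths each lying in a single child-subgraph, so $d_S(u,v)=d_{H_S}(u,v)$ for $u,v\in\border(S)$. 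A witness edge $e_S(u,v)$ is recovered by reading the first skeleton edge on a shortest $H_S$-walk from $u$ and descending into the corresponding child's $e_{S'}$ or $e_{S''}$ data.

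For $d_S(u,v;e)$ with $e$ incident to $u$, the edge $e$ lies in exactly one of the two child subgraphs, say $G[\setbelow(S')]$. I would build a modified skeleton $H_S^{(u,e)}$ identical to $H_S$ except that for each $a\in\border(S')$ the weight of $(u,a)$ is reset to $d_{S'}(u,a;e)$; this datum is present in the border solution at $S'$ precisely because $e$ is incident to $u\in\border(S')$. In a shortest simple path $P$ from $u$ to $v$ in $G[\setbelow(S)]\setminus\{e\}$, the node $u$ appears only at the starting endpoint, so the only sub-path of $P$ that could possibly have used $e$ in the unmodified graph is the initial one leaving $u$ into $\setbelow(S')$. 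Hence a single Dijkstra run from $u$ on $H_S^{(u,e)}$ yields $d_S(u,v;e)$ for all $v\in\border(S)$.

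The runtime is dominated by the nonleaf work. Each skeleton has $O(r\log m)$ vertices by Conditions~\ref{condition:cond2} and~\ref{condition:cond3}, so one all-pairs computation plus $O(|\border(S)|)$ single-source runs cost $\mathrm{poly}(r\log m)$ per vertex; summed over the $O(m/\ell(m))$ vertices of $T$ this is $o(m)$, because $\ell(m)=\lceil\log^{30}m\rceil$ swamps any polylogarithmic factor when $r=O(\log^2 m)$. The non-special-leaf cost is likewise $o(mr)$, yielding the overall bound $O(mr)$. The main obstacle is the correctness argument for $d_S(u,v;e)$ in the modified skeleton: one must verify that no later sub-path of $P$ can hide a traversal of $e$, which boils down to the observation that $e$ is incident only to $u$ and $u$ appears only at the start of a simple shortest path from $u$, so patching only the $u$-incident weights on the $S'$-side faithfully captures the deletion of $e$.
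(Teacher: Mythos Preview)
Your treatment of non-special leaves and of the quantities $d_S(u,v)$ and $e_S(u,v)$ is essentially the paper's argument recast in skeleton-graph language, and is fine. The gap is in the computation of $d_S(u,v;e)$.

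Your argument for the modified skeleton $H_S^{(u,e)}$ only establishes $d_S(u,v;e)\ge d_{H_S^{(u,e)}}(u,v)$: a shortest \emph{simple} path $P$ in $G[\setbelow(S)]\setminus\{e\}$ indeed meets $u$ only at its start, so its segment decomposition yields a skeleton walk of the same weight. The reverse inequality fails. A shortest walk in $H_S^{(u,e)}$ may use a non-initial skeleton edge $(a,b)$ of weight $d_{S'}(a,b)$ whose realizing shortest path in $G[\setbelow(S')]$ passes back through $u$ and traverses $e$; expanding then produces a non-simple walk in $G[\setbelow(S)]$ that \emph{does} use $e$, and there is no way to short-circuit it back to an $e$-avoiding walk of the same weight. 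Concretely, take $S=\{u,s\}$, $\setbelow(S')=\{u,s,x,v\}$ with edges $(u,x),(x,v),(u,s)$ of weight $1$ and $(u,v),(s,v)$ of weight $100$, and let $e=(u,x)$. Then $d_{S'}(u,s;e)=1$ and $d_{S'}(s,v)=3$ (via $s\!-\!u\!-\!x\!-\!v$), so your skeleton returns $4$, whereas $d_S(u,v;e)=100$. A secondary issue: your claim that $e$ lies in exactly one child subgraph is false when both endpoints of $e$ are in $S$.

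The paper sidesteps this by using $e$-avoiding distances for \emph{every} segment of the decomposition, not just the first: its dynamic program sets $d_{S,0}(a,b;e)=\min\{d_{S'}(a,b;e),d_{S''}(a,b;e)\}$ for all $a,b$, so every realized walk genuinely lives in $G[\setbelow(S)]\setminus\{e\}$. This is possible because at every vertex (including the leaves, via Step~3 of the oracle phase) the paper actually computes $d_S(a,b;e)$ for all $a,b\in\border(S)$ and all edges $e$ incident to $\border(S)$, which is slightly more than the stated border problem demands but exactly what the recursion needs. Your fix is straightforward: replace \emph{every} skeleton edge by its $e$-avoiding child distance (on whichever side contains $e$), not just those incident to $u$; the running time is unchanged.
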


\begin{proof}
Solutions for special leaf vertices are given.  We first show that it
takes $O(mr)$ time to compute solutions for all non-special leaf
vertices $L$ of $T$.  Let $\ell=\ell(m)$.  By
Condition~\ref{condition:cond1} of $T$, we have $\sum_{L\in
  \leaf(T)} (|\border(L)|+r)=O(mr/\ell)$, implying that $T$ has
$O(\frac{mr}{\ell\log^2 \ell})$ non-special leaf vertices.  For each
non-special leaf vertex $L$ of $T$, we run the following
$O(\ell\log^2 \ell)$-time steps.
\begin{enumerate}[\em Step 1.]
\item 
By Condition~\ref{condition:cond2} of $T$, we have
$|L|=\Theta(\ell)$.  We compute a data structure $\oracle(G[L])$ in
$O(\ell\log^2\ell)$ time as ensured by Lemma~\ref{lemma:lemma5.2}.

\item
For any two nodes $u$ and $v$ in $\border(L)$, we first obtain
$d_L(u,v)$ from $\oracle$ in $O(\ell^{7/10})$ time.  We then find a neighbor
$x$ of $u$ in $G[L]$ with $d_L(u,v)=w(u,x)+d_L(x,v)$ and let
$e_L(u,v)=(u,x)$, which can be obtained from $\oracle$ in $O(\ell^{7/10})$
time, since the degree of $G$ is $O(1)$.  By
Lemma~\ref{lemma:lemma5.2} and Condition~\ref{condition:cond2} of
$T$, the overall time complexity for this step is $O(\ell^{7/10}\cdot
|\border(L)|^2)=O(\ell^{7/10}\cdot r^2\log^2 m)=O(\ell^{9/10})$.

\item 
For each edge $e$ that is incident to $\border(L)$, we compute
$d_L(u,v;e)$ from $\oracle$ for all nodes $u$ and $v$ of $\border(L)$ as
follows: (1) Temporarily setting $w(e)=\infty$; (2) for each pair of
distinct nodes $u$ and $v$ in $\border(L)$, obtaining $d_L(u,v;e)$
from the distance of $u$ and $v$ in the current $G[L]$; and (3)
restoring the original weight of $e$.  Since the degree of $G$ is
$O(1)$, there are $O(|\border(L)|)$ choices of $e$.  By
Lemma~\ref{lemma:lemma5.2} and Condition~\ref{condition:cond2} of
$T$, the running time of this step is $O(\ell^{7/10}\cdot
|\border(L)|^3) =O(\ell^{7/10}\cdot r^3 \log^3 m) = O(\ell)$.
\end{enumerate}

\begin{figure}[t]
\centerline{\scalebox{0.85}{\input{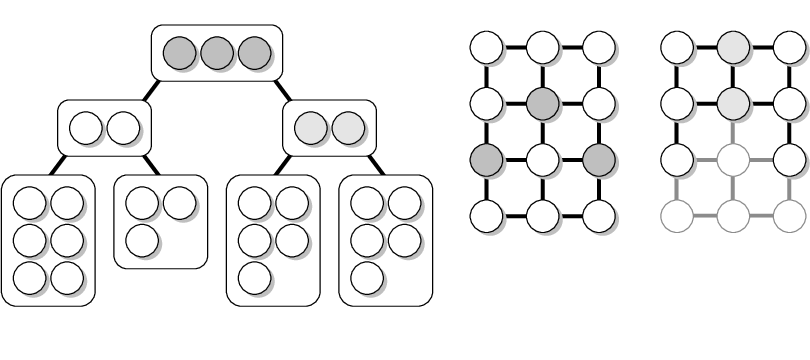}}}
\caption{(a) A dissection tree $T$ of the graph in~(b) with $R =
  \border(R)=\{2,7,10\}$, $S=\{7,8\}$, and $\border(S)=\{2,7,8,10\}$.
  (b) Graph $G = G[\setbelow(R)]$.  (c) Graph $G[\setbelow(S)]$.}
\label{figure:fig6}
\end{figure}

We now show that the solutions for all nonleaf vertices $S$ of $T$
can be computed in $O(m)$ time.  By definition of $\ell(m)$ and
Condition~\ref{condition:cond1} of $T$, we have
$|\nonleaf(T)|=O(m/\log^{30} m)$.  By $r=O(\log^2 m)$ and
Condition~\ref{condition:cond3} of $T$, we have
$|S|+|\border(S)|=O(\log^3 m)$.  It suffices to prove the following
{\em claim} for each nonleaf vertex $S$ of $T$: ``Given solutions for
$S'=\lchild(S)$ and $S''=\rchild(S)$, a solution for $S$ can be
computed in $O(|\border(S)|^3\cdot |S|^2)$ time.''  By
Property~\ref{property:prop2b} of $T$, $\setbelow(S')$ and
$\setbelow(S'')$ are dissected by $S$ in~$G$.  We use $(S,k)$-path to
denote a path of $G[\setbelow(S)]$ that switches to a different side
of $S$ at most $k$ times: Precisely, an {\em $(S,0)$-path} is a path
that completely lies in $G[\setbelow(S')]$ or completely lies in
$G[\setbelow(S'')]$.  For any positive integer $k$, we say that
$(u_1,u_2,\ldots,u_t)$ is an {\em $(S,k)$-path} if
$(u_1,u_2,\ldots,u_{t'})$ is an $(S,k-1)$-path, where $t'$ is the
smallest integer such that $(u_{t'},u_{t'+1},\ldots,u_t)$ is an
$(S,0)$-path.  For instance, let $T$ and $G$ be as shown in
Figures~\ref{figure:fig6}(a) and \ref{figure:fig6}(b). Let
$S=\{7,8\}$.  Note that $(8,7,11,10)$ is both an $(S,0)$-path and an
$(S,1)$-path with $u_{t'} = 8$.  However, $(2,3,7,11,10)$ is an
$(S,1)$-path with $u_{t'} = 7$ but not an $(S,0)$-path.  Based upon
the facts $\border(S)\subseteq\border(S')\cup\border(S'')$ and
$S\subseteq\border(S')\cap\border(S'')$ as ensured by
Lemma~\ref{lemma:lemma5.3}, we prove the above claim in the following
three stages, each of which is also illustrated by
Figure~\ref{figure:fig6}.
\begin{enumerate}[\em {Stage} 1.]
\item 
For any nodes $u$ and $v$ in $\border(S)$, let
$d_{S,i}(u,v)$ denote the minimum weight of any $(S,i)$-path of
$G[\setbelow(S)]$ between $u$ and $v$.  Any simple path of
$G[\setbelow(S)]$ is an $(S,|S|)$-path, so $d_S(u,v)=d_{S,|S|}(u,v)$.
As illustrated by~Figure~\ref{figure:fig6}(b), we have 
$d_{R,0}(10,2) = 7$ and $d_{R,1}(10,2) = 4$.
As illustrated by Figure~\ref{figure:fig6}(c), we have 
$d_{S,0}(10,2) = \infty$ and $d_{S,1}(10,2) = 9$.
One can verify the following recurrence relation.
\begin{displaymath}
d_{S,i}(u,v)=
\left\{
\begin{array}{ll}
0&\mbox{if $i=0$ and $u=v$};\\
\min\{d_{S'}(u,v), d_{S''}(u,v)\}&\mbox{if $i=0$ and $u\neq v$};\\
\min\{d_{S,i-1}(u,y)+d_{S,0}(y,v): y\in S\cup\{v\}\}&\mbox{if $i\geq 1$}.
\end{array}
\right.
\end{displaymath}
This stage takes $O(|\border(S)|^2 \cdot |S|^2)$ time via
dynamic programming.

\item 
For any distinct nodes $u$ and $v$ in $\border(S)$, let
$e_{S,i}(u,v)$ denote an incident edge of $u$ in a min-weight
$(S,i)$-path of $G[\setbelow(S)]$ between $u$ and $v$.  
If no $(S,i)$-path of $G[\setbelow(S)]$ between $u$ and $v$
exists, let $e_{S,i}(u,v) = \varnothing$.
As illustrated by~Figure~\ref{figure:fig6}(b), edge $(10,6)$ 
is the only choice for $e_{R,0}(10,2)$ and $e_{R,1}(10,2)$.
As illustrated by Figure~\ref{figure:fig6}(c), we have 
$e_{S,0}(10,2) = \varnothing$, and edge $(10,11)$ is the 
only choice for $e_{S,1}(10,2)$.
Let
\begin{displaymath}
e_{S,i}(u,v)=
\left\{
\begin{array}{ll}
e_{S'}(u,v)&\mbox{if $i=0$ and $d_{S'}(u,v)\leq d_{S''}(u,v)$};\\
e_{S''}(u,v)&\mbox{if $i=0$ and $d_{S'}(u,v)>d_{S''}(u,v)$};\\
e_{S,i-1}(u,y)&\mbox{if $i\geq 1$},
\end{array}
\right.
\end{displaymath}
where $y$ can be any node in $S\cup\{v\}\setminus \{u\}$ with
$d_{S,i}(u,v)=d_{S,i-1}(u,y)+d_{S,0}(y,v)$.  Since both $e_{S'}(u,v)$
and $e_{S''}(u,v)$ are incident to $u$ in $G[\setbelow(S)]$, each
$e_{S,i}(u,v)$ is incident to $u$ in $G[\setbelow(S)]$.  Therefore,
$e_{S,|S|}(u,v)$ is a valid choice of $e_S(u,v)$.  This stage takes
$O(|\border(S)|^2\cdot |S|^2)$ time via dynamic programming.

\item 
For any nodes $u$ and $v$ in $\border(S)$ and any edge
$e$ of $G[\setbelow(S)]$ that is incident to $\border(S)$, let
$d_{S,i}(u,v;e)$ be the minimum weight of any $(S,i)$-path in
$G[\setbelow(S)]\setminus \{e\}$ between $u$ and $v$.  We have
$d_S(u,v;e)=d_{S,|S|}(u,v;e)$.  
As illustrated by Figure~\ref{figure:fig6}(b), we have 
$d_{R,0}(10,2;(10,6)) = d_{R,1}(10,2;(10,6)) = 8$.
As illustrated by Figure~\ref{figure:fig6}(c), we have 
$d_{S,0}(10,2;(10,11)) = d_{S,1}(10,2;(10,11)) = \infty$.  
One can verify the following recurrence relation.
\begin{displaymath}
d_{S,i}(u,v;e)=\left\{
\begin{array}{ll}
0&\mbox{if $i=0$ and $u=v$};\\
\min\{d_{S'}(u,v;e), d_{S''}(u,v;e)\}&\mbox{if $i=0$ and $u \neq v$};\\
\min\{d_{S,i-1}(u,y;e)+d_{S,0}(y,v;e): y\in S\cup\{v\}\}&\mbox{if $i\geq 1$}.
\end{array}
\right.
\end{displaymath}
Since the degree of $G$ is $O(1)$, the number of choices of $e$ is
$O(|\border(S)|)$.  This stage takes $O(|\border(S)|^3\cdot |S|^2)$
time via dynamic programming.
\end{enumerate}
The lemma is proved.
\end{proof}

\subsection{Solving the border problems for the special leaf vertices}
\label{subsection:special}

We need the following linear-time pre-computable data structure in the
proof of Lemma~\ref{lemma:lemma5.6} to solve the border problems of
$(G,T)$ for all special leaf vertices of $T$ as well as in the proof
of Lemma~\ref{lemma:lemma6.1} to solve the leaf problem of $(G,T)$.

\begin{lemma}
\label{lemma:lemma5.5}
For any given positive integers $k=O(\log\log m)^{O(1)}$ and $w=O(\log
m)^{O(1)}$, it takes $O(m)$ time to compute a data structure $\lookuptable(k,w)$
such that the following statements hold for any $O(1)$-degree graph
$H$ with at most $k$ nodes whose edge weights are at most $w$.
\begin{enumerate}
\item 
\label{item:lemma5.5:item1}

It takes $O(|H|)$ time to obtain a reference pointer $\textit{ref}(H)$
from $\lookuptable(k,w)$ such that each of the following queries for
any two distinct nodes $u$ and $v$ of $H$ can be answered from
$\textit{ref}(H)$ and $\lookuptable(k,w)$ in $O(1)$ time: (1) the
distance of $u$ and $v$ in $H$, (2) an edge incident to $u$ that
belongs to at least one min-weight path between $u$ and $v$ in $H$,
and (3) the distance of $u$ and $v$ in $H\setminus \{e\}$ for each
edge $e$ of $H$ incident to $u$.

\item 
\label{item:lemma5.5:item2}
It takes $O(|H|)$ time to obtain $\girth(H)$ from $\lookuptable(k,w)$.
\end{enumerate}
\end{lemma}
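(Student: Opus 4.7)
The plan is a standard table-lookup (Four-Russians-style) precomputation: since $k$ and $w$ are tiny, I can afford to enumerate every possible small graph and store all the desired answers in an array indexed by a short bit-encoding of the graph.

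First I would bound the number of relevant graphs. An $O(1)$-degree graph on at most $k$ nodes with weights in $\{0,1,\ldots,w\}$ can be described by giving, for each node, its $O(1)$ neighbors and the corresponding weights. Each neighbor index uses $O(\log k)$ bits and each weight uses $O(\log w)$ bits, so the whole graph is encoded in $O(k\log(kw))$ bits. With $k=(\log\log m)^{O(1)}$ and $w=(\log m)^{O(1)}$ this is $O((\log\log m)^{O(1)})$ bits, which fits in $O(1)$ machine words under the standard RAM model. The total number of such encodings is therefore $2^{O(k\log(kw))}=2^{(\log\log m)^{O(1)}}$, which is $o(m^\epsilon)$ for every fixed $\epsilon>0$.

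Next, for each canonical graph $H^\ast$ in this enumeration, I would precompute and store: (a) all-pairs shortest distances (Floyd--Warshall, $O(k^3)$); (b) for each edge $e$ of $H^\ast$, all-pairs shortest distances in $H^\ast\setminus\{e\}$, done by rerunning Floyd--Warshall $O(k)$ times (since $H^\ast$ has $O(k)$ edges), $O(k^4)$ total; (c) for each ordered pair $(u,v)$, an edge incident to $u$ lying on some min-weight $u$-$v$ path, extracted from the distance table in $O(k^2)$ overall; and (d) the girth as $\min_{(u,v)\in E(H^\ast)}(w(u,v)+d(u,v;(u,v)))$. Per-graph work is $\mathrm{poly}(k)=(\log\log m)^{O(1)}$, so the whole precomputation stage costs $2^{(\log\log m)^{O(1)}}\cdot(\log\log m)^{O(1)}=o(m)$, well within the $O(m)$ budget. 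The result is stored as a flat array $\lookuptable(k,w)$ indexed by the canonical encoding.

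To answer a query on a concrete $H$ (and subset $B$), I would walk the adjacency list of $H$ once to emit its canonical bit-encoding in $O(|H|)$ time, index into $\lookuptable(k,w)$ in $O(1)$ time, and then either read off $\girth(H)$ in $O(1)$ further time for Part~\ref{item:lemma5.5:item2}, or iterate over the $O(|B|^2)$ pairs $(u,v)\in B\times B$ and the $O(1)$ edges incident to each $u$ to read the precomputed entries for Part~\ref{item:lemma5.5:item1}, giving $O(|H|+|B|^2)$ total. The main obstacle is the canonical encoding step: I must produce the table index in $O(|H|)$ time (not $\mathrm{poly}(|H|)$), and the encoding must be faithful so that two queries whose graphs agree as labeled graphs hit the same array slot with matching node identities. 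Because $|H|=O(k)$ fits in $O(1)$ machine words, no nontrivial graph-canonicalization is needed: I simply relabel the nodes of $H$ to $1,\ldots,|H|$ in their input order, bit-pack the adjacency-and-weight lists in that order, and translate the query nodes in $B$ through the same relabeling. Since the precomputation enumerates all labeled graphs up to this trivial input-order reindexing, correctness of the lookup follows.
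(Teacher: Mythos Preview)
Your proposal is correct and follows essentially the same table-lookup approach as the paper: enumerate all labeled bounded-degree graphs on at most $k$ nodes with weights at most $w$, precompute the required data for each in $\mathrm{poly}(k)$ time, and observe that the total work is $2^{(\log\log m)^{O(1)}}\cdot(\log\log m)^{O(1)}=o(m)$. The paper's proof is actually terser (it just bounds the enumeration by $O(w)^{O(k^2)}$ without exploiting the degree bound and omits the encoding and query-time details), so your write-up is a faithful and somewhat more explicit version of the same argument.
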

\begin{proof}
Let $\mathcal{H}$ consist of all graphs of at most $k$ nodes whose
maximum weight is at most $w$.  It takes $O(w)^{O(k^2)}$ time to list
all graphs $H$ in $\mathcal{H}$. It takes $O(k^{O(1)})$ time to
pre-compute the information in Statements~1 and~2 for each graph $H$
in $\mathcal{H}$.  The lemma follows from $$\left(O(\log
m)^{O(1)}\right)^{\left(O(\log\log m)^{O(1)}\right)}\cdot
O\left((\log\log m)^{O(1)}\right)^{O(1)}=O(m).$$
\end{proof}

\begin{lemma}
\label{lemma:lemma5.6}
Let $G$ be an $m$-node $O(1)$-degree $r$-outerplane graph with
$\wmax(G)=O(\log^2 m)$.  Given an $r$-dissection tree $T$ of $G$, the
border problems of $(G,T)$ for all special leaf vertices of $T$ can
be solved in $O(mr)$ time.
\end{lemma}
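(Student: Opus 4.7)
The plan is to apply the framework of Section~\ref{section:framework} at a second level, inside each special leaf vertex of the outer dissection tree~$T$. Fix a special leaf $L$ of~$T$. Then $G[L]$ is an $r$-outerplane $O(1)$-degree subgraph with $|L|=\Theta(\log^{30} m)$ nodes and edge weights at most $\wmax(G)=O(\log^2 m)$, and by definition of \emph{special}, $|\border_T(L)|=O((\log\log m)^2)$. The goal is to compute, for each pair $u,v\in\border_T(L)$, the quantities $d_L(u,v)$, an incident min-weight-path edge $e_L(u,v)$, and $d_L(u,v;e)$ for each edge $e$ of~$G$ incident to~$u$, using $O(|L|r)$ time per special leaf, which sums to $O(mr)$.

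First, construct a dissection tree $T_L$ of $G[L]$ by the same method as in the proof of Lemma~\ref{lemma:lemma4.1}, but with the smaller leaf-size parameter $\ell(|L|)=\Theta((\log\log m)^{30})$. The underlying Goodrich-type construction takes $O(|L|)$ time, and each leaf vertex $L'$ of $T_L$ has $\Theta((\log\log m)^{30})$ nodes. Augment the construction so that $\border_T(L)\subseteq\treeroot(T_L)$; this adds only $O((\log\log m)^2)$ nodes to the root and does not violate any of the conditions of an $r$-dissection tree.

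Next, pre-compute the lookup table $\lookuptable(k,w)$ once, globally, with $k=\Theta((\log\log m)^{30})$ and $w=O(\log^2 m)$, in $O(m)$ time by Lemma~\ref{lemma:lemma5.5}. For each leaf vertex $L'$ of each $T_L$, use $\lookuptable$ with $H=G[L']$ and $B=\border_{T_L}(L')$ to obtain the border-problem solution for $L'$ in $O(|L'|+|\border_{T_L}(L')|^2)$ time by Lemma~\ref{lemma:lemma5.5}(\ref{item:lemma5.5:item1}). Then propagate border-problem solutions upward along $T_L$ by executing Stages~1--3 of the dynamic program from the proof of Lemma~\ref{lemma:lemma5.4}. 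The solution at $\treeroot(T_L)$ covers every pair in $\treeroot(T_L)\supseteq\border_T(L)$, and restricting to $\border_T(L)$ yields exactly the border-problem solution for~$L$ with respect to~$T$.

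The main obstacle is the time analysis: the per-leaf lookup cost is $O(|L'|+|\border_{T_L}(L')|^2)$ and the per-nonleaf DP cost is $O(|\border|^3 |S|^2)$, and telescoping these over all vertices of $T_L$, and then over all special leaves $L$ of~$T$, must fit within $O(mr)$. The argument will rely on the $r$-dissection tree conditions for $T_L$ together with the specific values $r=O(\log^2 m)$, $|L|=\Theta(\log^{30} m)$, and $|L'|=\Theta((\log\log m)^{30})$, exploiting the heavy suppression by $\ell(|L|)$ in the denominator of the aggregate sums and being careful that the leaf-level border sizes $|\border_{T_L}(L')|=O(r\log|L|)$ interact correctly with the bounds.
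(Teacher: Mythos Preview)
Your overall plan matches the paper's: recurse with a second-level $r$-dissection tree $T_L$ inside each special leaf $L$, solve the bottom level with the lookup table of Lemma~\ref{lemma:lemma5.5}, and propagate upward with the DP of Lemma~\ref{lemma:lemma5.4}. There is, however, a genuine gap in how you make $\border_T(L)$ appear at the top of $T_L$.

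You write that you ``add only $O((\log\log m)^2)$ nodes to the root'' of $T_L$. Adding $\border_T(L)$ to the root alone breaks Property~2(a) of a dissection tree: a node $u\in\border_T(L)$ may lie in $\setbelow(\lchild(\treeroot(T_L)))$ but not in $\setbelow(\rchild(\treeroot(T_L)))$, so the augmented root is no longer contained in the intersection of its children's $\setbelow$-sets. Once Property~2(a) fails, Lemma~\ref{lemma:lemma5.3} fails, and the DP of Lemma~\ref{lemma:lemma5.4} cannot even formulate $d_{S,0}(u,v)$ at the root when $u$ is on one side and $v$ on the other, because neither child's border problem was solved for the pair $(u,v)$. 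The paper's fix is to add $\border_T(L)$ to \emph{every} vertex of the inner tree $T'_L$, not just the root. This preserves Properties~1 and~2, and the resulting $T_L$ is an $r_L$-dissection tree of $G[L]$ with $r_L=r+|\border_T(L)|$; then Lemma~\ref{lemma:lemma5.4} applies as a black box in $O(m_L\cdot r_L)$ time, and summing $O(\ell(m))\cdot\sum_{L}(r+|\border_T(L)|)=O(mr)$ uses Condition~\ref{condition:cond1} of $T$.

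A second, smaller point: you record only $|\border_T(L)|=O((\log\log m)^2)$ from the definition of \emph{special}, but the definition also forces $r\le\lceil\log^2\ell(m)\rceil=O(\log^2 m_L)$. This bound on $r$ (not merely $r=O(\log^2 m)$) is exactly what licenses the use of Lemma~\ref{lemma:lemma4.1} on the $m_L$-node graph $G[L]$ and makes the inner per-vertex bounds $|S|,|\border(S)|=O(r_L\log m_L)$ small enough for your time analysis to close.
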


\begin{proof}
We assume that $T$ does have special leaf vertices, since otherwise
the lemma holds trivially. By the assumption, we know $r\leq
\lceil\log^2 \ell(m)\rceil$.  Let $L$ be a special leaf vertex of
$T$.  Let $G_L=G[L]$.  Let $m_L=|L|$.  By Condition~\ref{condition:cond2} 
of $T$, we know $m_L=\Theta(\ell(m))$.  Let $r_L=r+|\border(L)|$.
Clearly, $G_L$ is an $m_L$-node $O(1)$-degree
$r_L$-outerplane graph with $r_L=O(\log^2 m_L)$.  By
Lemma~\ref{lemma:lemma4.1}, it takes $O(m_L)$ time to obtain an
$r_L$-dissection tree $T'_L$ of $G_L$. Let $T_L$ be obtained from
$T'_L$ by replacing each vertex $S'$ of $T'_L$ by $S'\cup
\border(L)$.
\begin{figure}[t]
\centerline{\scalebox{0.85}{\input{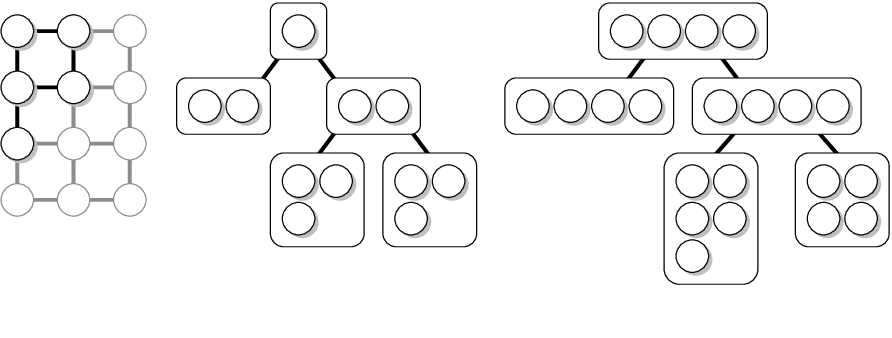}}}
\caption{(a) Graph $G_L=G[L]$ with $L=\{2,3,4,7,8\}$. (b) A dissection
  tree $T'_L$ of $G_L$.  (c) A dissection tree $T_L$ of $G_L$
  obtained from $T'_L$.}
\label{figure:fig7}
\end{figure}
For instance, let $T$ and $G$ be as shown in
Figures~\ref{figure:fig6}(a) and \ref{figure:fig6}(b). 
If $L=\{2,3,4,7,8\}$ is a special leaf vertex of $T$,
then $G_L$ is as shown in Figure~\ref{figure:fig7}(a). We have
$\border(L) = \{2,7,8\}$.  If $T'_L$ is as shown in
Figure~\ref{figure:fig7}(b), then $T_L$ is as shown in
Figure~\ref{figure:fig7}(c).
Clearly, $\border(L)\subseteq \treeroot(T_L)$.  We show that $T_L$
is also an $r_L$-dissection tree of $G_L$.  Since $L$ is a leaf vertex
of $T$, we have $\border(L)\subseteq L$.  Therefore,
Properties~\ref{property:prop1} and~\ref{property:prop2} of $T_L$
follow from Properties~\ref{property:prop1} and~\ref{property:prop2}
of $T'_L$.  Let $\ell_L=\ell(m_L)$.  By
Condition~\ref{condition:cond1} of $T'_L$ and $|\border(L)|=O(r_L)$,
we have $|V(T_L)|=|V(T'_L)|=O(m_L/\ell_L)$ and
\begin{displaymath}
\sum_{\hat{L}\in\leaf(T_L)}|\border(\hat{L})|\leq
|V(T'_L)|\cdot|\border(L)|+\sum_{L'\in\leaf(T'_L)}
|\border(L')|=O(m_L\cdot r_L/\ell_L).
\end{displaymath}
Condition~\ref{condition:cond1} holds for $T_L$.  Adding $\border(L)$
to vertex $S'$ of $T'_L$ increases $|S'|$ and $|\border(S')|$ by no
more than $r_L$, so Conditions~\ref{condition:cond2}
and~\ref{condition:cond3} for $T_L$ follow from
Conditions~\ref{condition:cond2} and~\ref{condition:cond3} for
$T'_L$.  Therefore, $T_L$ is an $r_L$-dissection tree of $G_L$ with
$\border(L)\subseteq \treeroot(T_L)$.  It follows that a solution to
the border problem of $(G_L, T_L)$ for $\treeroot(T_L)$ yields a
solution to the border problem of $(G,T)$ for $L$.

Let $k$ be the maximum $|\hat{L}|$ over all leaf vertices $\hat{L}$ of
$T_L$ and all special leaf vertices $L$ of $T$. We have
$k=\Theta(\ell_L)=O((\log\log m)^{30})$.  By $\wmax(G)=O(\log^2 m)$,
it takes $O(m)$ time to compute a data structure
$\lookuptable(k,\wmax(G))$, as ensured by Lemma~\ref{lemma:lemma5.5}.
By Lemma~\ref{lemma:lemma5.5}(\ref{item:lemma5.5:item1}), it takes
$O(|\hat{L}|+|\border(\hat{L})|^2)=O(\ell_L)$ time to obtain from the
pre-computed data structure $\lookuptable(k,\wmax(G)))$ a solution to
the border problem of $(G_L,T_L)$ for each special leaf vertex
$\hat{L}$ of $T_L$.  By Condition~\ref{condition:cond1} of $T_L$, the
border problems of $(G_L,T_L)$ for all special leaf vertices of $T_L$
can be solved in overall $O(m_L / \ell_L) \cdot O(\ell_L) = O(m_L)$
time.  By Lemma~\ref{lemma:lemma5.4}, it takes $O(m_L\cdot r_L)$ time
to obtain a collection of solutions to the border problems of
$(G_L,T_L)$ for all vertices of $T_L$, including $\treeroot(T_L)$,
which yields a solution to the border problem of $(G,T)$ for the
special leaf vertex $L$ of $T$.  By Condition~\ref{condition:cond1} of
$T$ and $O(m_L\cdot r_L)=O(\ell(m)\cdot(r+|\border(L)|))$, the overall
running time to solve the border problems of $(G,T)$ for all special
leaf vertices of $T$ is $O(\ell(m))\cdot\sum_{L\in\leaf(T)}
O(r+|\border(L)|)=O(mr)$.  The lemma is proved.
\end{proof}

\section{Task 3: solving the leaf problem}
\label{section:task3}

\begin{lemma}
\label{lemma:lemma6.1}
Let $G$ be an $m$-node $O(1)$-degree $r$-outerplane graph satisfying that
$\wmax(G)+r=O(\density(G))$.  Given an $r$-dissection tree $T$
of $G$, the leaf problem of $(G,T)$ can be solved in
$O(m\cdot\density(G))$ time.
\end{lemma}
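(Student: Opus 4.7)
The plan is to compute $\girth(G_L)$ for every leaf vertex $L$ of $T$ by recursively invoking the framework of Section~\ref{section:framework} on each $G_L = G[L]$, and then returning $\min_{L \in \leaf(T)} \girth(G_L)$. By Condition~\ref{condition:cond2}, each $G_L$ has $m_L = \Theta(\ell(m)) = \Theta(\log^{30} m)$ nodes, outerplane radius at most $r$, and maximum edge weight at most $\wmax(G) = O(\log^2 m)$. This second-level recursion will ultimately reduce the leaf problem to girth and distance queries on $O((\log\log m)^{30})$-node subgraphs, which can be answered directly by the data structure $\lookuptable(k,w)$ of Lemma~\ref{lemma:lemma5.5} with $k = \Theta((\log\log m)^{30})$ and $w = \wmax(G) = O(\log^2 m)$, pre-computed once in $O(m)$ time.

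For each leaf $L$ of $T$, I would first apply the construction behind Lemma~\ref{lemma:lemma4.1} to $G_L$ (which remains valid because $r = O(m_L^{1/15})$, well below the $m_L^{1-\epsilon}$ threshold of Goodrich's decomposition) to obtain a dissection tree $T_L$ whose leaves satisfy $|\hat L| = \Theta(\ell(m_L)) = \Theta((\log\log m)^{30})$. I would then solve the leaf problem of $(G_L, T_L)$ by invoking Lemma~\ref{lemma:lemma5.5}(\ref{item:lemma5.5:item2}) on each leaf $\hat L$ of $T_L$ in $O(|\hat L|)$ time. For the nonleaf problem of $(G_L, T_L)$, I would solve the border problem of each leaf $\hat L$ via Lemma~\ref{lemma:lemma5.5}(\ref{item:lemma5.5:item1}) in $O(|\hat L| + |\border(\hat L)|^2)$ time and then propagate the solutions upward using the dynamic programming scheme of Stages~1--3 in the proof of Lemma~\ref{lemma:lemma5.4}. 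Plugging the resulting leaf and nonleaf solutions of $(G_L, T_L)$ into Lemma~\ref{lemma:lemma3.1} yields $\girth(G_L)$, and taking the minimum over $L$ solves the leaf problem of $(G, T)$.

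The main obstacle will be controlling the total running time, since $r_L$ can be as large as $r = \Theta(\log^2 m)$ while $\log^2 m_L = \Theta((\log\log m)^2)$, so Lemma~\ref{lemma:lemma5.4} does not apply verbatim to $(G_L, T_L)$ and $|\border(\hat L)| = O(r \log m_L)$ may dominate naive estimates. I expect to dispatch this by splitting leaves of $T_L$ into those whose border problems are cheap under $\lookuptable$ and those that require an auxiliary treatment, much as Lemma~\ref{lemma:lemma5.6} separates special from non-special leaves; the sparse-boundary bound in Condition~\ref{condition:cond1} of $T$ should amortize the expensive leaves against the total boundary mass $O(mr/\ell(m))$. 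Combined with $\sum_{L \in \leaf(T)} m_L = O(m)$ and $r = O(\density(G))$, processing each $G_L$ in $O(m_L \cdot r)$ time yields an overall running time of $O(mr) = O(m \cdot \density(G))$, matching the claimed bound.
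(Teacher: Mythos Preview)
Your overall strategy---recurse on each $G_L=G[L]$, build a second-level dissection tree $T_L$, solve its leaf problem via $\lookuptable$ and its nonleaf problem via the border dynamic programming, then combine through Lemma~\ref{lemma:lemma3.1}---matches the paper exactly. But you are missing the one ingredient that makes all the pieces fit: a case split on $\density(G)$.

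The paper first checks whether $\density(G)\ge\log^2\ell(m)$. If so, it simply runs Lemma~\ref{lemma:lemma2.1} on each $G[L]$ in $O(\ell(m)\log^2\ell(m))$ time, which is $O(\ell(m)\cdot\density(G))$ and sums to $O(m\cdot\density(G))$. Only in the complementary case does it recurse, and there the hypothesis $r=O(\density(G))$ gives $r=O(\log^2\ell(m))=O(\log^2 m_L)$, so Lemma~\ref{lemma:lemma4.1} and Lemma~\ref{lemma:lemma5.1} apply \emph{verbatim} to $(G_L,T_L)$, yielding the $O(m_L\cdot r)$ per-leaf bound with no further work.

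You recognise that $r$ may far exceed $\log^2 m_L$, but your proposed fix does not close the gap. Lemma~\ref{lemma:lemma4.1} as stated requires $r=O(\log^2 m_L)$; appealing to Goodrich's underlying $O(m^{1-\epsilon})$ threshold only controls the construction time, not the structural conditions you need downstream (and the lemma's hypothesis says nothing bounding $r$ by $m_L^{1/15}$ in general). More importantly, your amortisation argument is a non-sequitur: Condition~\ref{condition:cond1} of $T$ bounds $\sum_{L\in\leaf(T)}|\border(L)|$, which concerns the boundaries of the \emph{first-level} leaves $L$; it says nothing about the borders of the \emph{second-level} leaves $\hat L$ of $T_L$, which are what drive the cost of the border problems you want to amortise. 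Once you insert the case split on $\density(G)$, the difficulty you flagged disappears and the rest of your plan goes through exactly as the paper does.
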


\begin{proof}
If $\density(G)\geq\log^2\ell(m)$, by Condition~\ref{condition:cond1}
of $T$ and Theorem~\ref{lemma:lemma2.1}, the problem can be solved in
$O(\ell(m)\log^2 \ell(m))\cdot O(m/\ell(m))=O(m\cdot\density(G))$
time.  The rest of the proof assumes
$\wmax(G)+r=O(\density(G))=O(\log^2\ell(m))$.  Let $L$ be a leaf
vertex of $T$. Let $m_L=|L|$.  Let $G_L=G[L]$.  By
Condition~\ref{condition:cond2} of $T$, we have $m_L=\Theta(\ell(m))$.
Therefore, $G_L$ is an $m_L$-node $O(1)$-degree $r$-outerplane graph
with $\wmax(G_L)+r=O(\log^2 m_L)$.  By Lemma~\ref{lemma:lemma4.1}, an
$r$-dissection tree $T_L$ of $G_L$ can be obtained from $G_L$ in
$O(m_L)$ time.  Let $k$ be the maximum $|\hat{L}|$ over all leaf
vertices $\hat{L}$ of $T_L$ and all leaf vertices $L$ of $T$.  We have
$k=\Theta(\ell(m_L))=O((\log\log m)^{30})$.  Let
$\lookuptable(k,\wmax(G))$ be a data structure computable in $O(m)$
time as ensured by Lemma~\ref{lemma:lemma5.5}. By
Lemma~\ref{lemma:lemma5.5}(\ref{item:lemma5.5:item2}),
$\girth(G_L[\hat{L}])$ for any leaf vertex $\hat{L}$ of $T_L$ can be
obtained from $\lookuptable(k,\wmax(G))$ in $O(|\hat{L}|)$ time.  By
Conditions~\ref{condition:cond1} and~\ref{condition:cond2} of $T_L$,
the solution to the leaf problem of $(G_L, T_L)$ can be obtained from
$\lookuptable(k,\wmax(G))$ in $O(m_L/\ell(m_L))\cdot
O(\ell(m_L))=O(m_L)$ time.  By Lemma~\ref{lemma:lemma5.1}, the nonleaf
problem of $(G_L, T_L)$ can be solved in $O(m_L\cdot r)$ time.  By
Conditions~\ref{condition:cond1} and~\ref{condition:cond3} of $T_L$,
we have $\squares(T_L)=O(m_L\cdot r^2\log^2 m_L/\ell(m_L))=O(m_L)$.
By Lemma~\ref{lemma:lemma3.1}, it takes $O(m_L)$ time to compute
$\girth(G_L)$ from the solutions to the leaf and nonleaf problems of
$(G_L, T_L)$.  Therefore, $\girth(G[L])$ can be computed in
$O(m_L\cdot r)=O(\ell(m)\cdot r)$ time.  By
Condition~\ref{condition:cond1} of $T$, it takes $O(m/\ell(m))\cdot
O(\ell(m)\cdot r)=O(m\cdot\density(G))$ time to solve the leaf problem
of $(G,T)$. The lemma is proved.
\end{proof}

It remains to prove the main lemma of the paper, which implies
Theorem~\ref{theorem:theorem1.1}, as already shown
in~\S\ref{subsection:proving-main-theorem}.

\begin{proof}[Proof of Lemma~\ref{lemma:lemma2.6}]
Let $m=|V(G)|$ and $n=|V(\expand(G))|$.  Let $r= \radius(G)$.  That
is, $G$ is an $m$-node $O(1)$-degree $r$-outerplane graph with
$\wmax(G)+r=O(\density(G))=O(\log^2 m)$.  By
Lemma~\ref{lemma:lemma4.1}, an $r$-dissection tree $T$ of $G$ can be
obtained from $G$ in $O(m)$ time.  By Lemma~\ref{lemma:lemma5.1}, the
nonleaf problem of $(G,T)$ can be solved in $O(mr)=O(n)$ time.  By
Lemma~\ref{lemma:lemma6.1}, it takes $O(m\cdot\density(G))=O(n)$ time
to solve the leaf problem of $(G,T)$.  By
Conditions~\ref{condition:cond1} and~\ref{condition:cond3} of $T$, we
have $\squares(T)= O(mr^2\log^2 m/\ell(m))=O(m)$.  The lemma follows
from Lemma~\ref{lemma:lemma3.1}.
\end{proof}

\section{Concluding remarks}
\label{section:conclude}
We give the first linear-time algorithm for computing the girth of any
undirected unweighted planar graph. Our algorithm can be modified into
one that finds a simple min-weight cycle. Specifically, when we solve
each girth problem or each distance problem in our algorithm, we
additionally let the algorithm output a node on a corresponding
min-weight cycle or min-weight path. As a result, our algorithm not
only computes the girth of the input graph, but also outputs a node
$u$ on a min-weight cycle of the input graph.  We can then use the
breadth-first search algorithm of Itai and
Rodeh~\cite{ItaiR78} to output a min-weight cycle containing
$u$ in linear time.  

The $O(n \log n)$-time algorithm of Weimann and
Yuster~\cite{WeimannY10} works on $O(1)$-genus graphs. It would be of
interest to see if our algorithm can be extended to work for
$O(1)$-genus graphs by, e.g., extending our black-box tools (the
decomposition tree of Goodrich~\cite{Goodrich95} and the distance
oracle of Klein~\cite{Klein05}) to work for $O(1)$-genus graphs.

\section*{Acknowledgment}
We thank the anonymous reviewers for their helpful comments.
We also thank Hsueh-Yi Chen, Chia-Ching Lin, and Cheng-Hsun Weng for
discussion at an early stage of this research.

\bibliographystyle{abbrv}
\bibliography{girth}
\end{document}